\newif\ifarxiv
\arxivtrue 

\ifarxiv

\documentclass[11pt,letterpaper]{article}
\usepackage{typearea}
\usepackage{setspace}

\usepackage{fullpage}

\usepackage{comment} 
\usepackage{bbm}
\usepackage{framed} 
\usepackage{url} 
\usepackage{complexity}
\usepackage{booktabs}
\usepackage{amsmath,amssymb}
\usepackage{float}

\usepackage{amsthm}
\usepackage{thmtools} 
\usepackage{thm-restate}
\usepackage{nicefrac}
\usepackage{calc}
\usepackage{enumerate}
\usepackage{enumitem}
\usepackage[usenames,dvipsnames]{xcolor}

\usepackage[ruled,vlined,linesnumbered]{algorithm2e}
\usepackage{forloop}

\usepackage{graphicx}
\usepackage[font=footnotesize,labelfont=bf]{subcaption}
\usepackage[font=footnotesize,labelfont=bf]{caption}
\usepackage[nobreak=true]{mdframed}
\usepackage{appendix}
\usepackage[noend]{algpseudocode}
\usepackage[colorinlistoftodos]{todonotes}
\usepackage{xr}
\usepackage{array}
\usepackage{xspace}
\definecolor{ForestGreen}{rgb}{0.1333,0.5451,0.1333}
\definecolor{DarkRed}{rgb}{0.8,0,0}
\definecolor{Red}{rgb}{1,0,0}
\usepackage[linktocpage=true,
pagebackref=true,colorlinks,
linkcolor=DarkRed,citecolor=ForestGreen,
bookmarks,bookmarksopen,bookmarksnumbered]{hyperref}

\usepackage[capitalise]{cleveref}
\crefrangelabelformat{enumi}{#3#1#4--#5#2#6}
\usepackage{parskip}
\usepackage{chngcntr}
\usepackage{mathtools,stackengine}
\usepackage{multirow}

\stackMath
\newcommand{\stackGeq}[1]{%
	\setbox0=\hbox{${}\mathrel{\stackon[-1pt]{\geq}{\scriptstyle\text{#1\strut}}}{}$}
	\xdef\tmpwd{\dimexpr\the\wd0\relax}
	\kern.5\tmpwd\mathclap{\box0}&\kern.5\tmpwd
}

\usepackage{nameref}

\usepackage{tikz}
\usetikzlibrary{patterns}

\allowdisplaybreaks

\newcommand{\one}{{\sc Type 1}\xspace}
\newcommand{\two}{{\sc Type 2}\xspace}
\newcommand{\three}{{\sc Type 3}\xspace}

\newcommand{\eat}[1]{}

\newcommand{\ins}{{\tt insert}}

\newcommand\eps{\varepsilon}

\DeclarePairedDelimiterX{\expectarg}[1]{[}{]}{%
	\ifnum\currentgrouptype=16 \else\begingroup\fi
	\activatebar#1
	\ifnum\currentgrouptype=16 \else\endgroup\fi
}

\DeclarePairedDelimiterX{\nicesetarg}[1]{\{}{\}}{%
	\ifnum\currentgrouptype=16 \else\begingroup\fi
	\activatebar#1
	\ifnum\currentgrouptype=16 \else\endgroup\fi
}

\newcommand{\innermid}{\nonscript\;\delimsize\vert\nonscript\;}
\newcommand{\activatebar}{%
	\begingroup\lccode`\~=`\|
	\lowercase{\endgroup\let~}\innermid 
	\mathcode`|=\string"8000
}

\newcommand\opt{\mathsf{opt}\xspace}

\counterwithin{equation}{section}

\usepackage{eqparbox}

\theoremstyle{plain}

\newtheorem{theorem}{Theorem}[section]

\newtheorem{lemma}[theorem]{Lemma}

\newtheorem{claim}[theorem]{Claim}

\newtheorem{corollary}[theorem]{Corollary}

\newtheorem{remark}[theorem]{Remark}


\newlength{\continueindent}
\setlength{\continueindent}{2em}
\usepackage{etoolbox}
\makeatletter
\newcommand*{\ALG@customparshape}{\parshape 2 \leftmargin \linewidth \dimexpr\ALG@tlm+\continueindent\relax \dimexpr\linewidth+\leftmargin-\ALG@tlm-\continueindent\relax}
\apptocmd{\ALG@beginblock}{\ALG@customparshape}{}{\errmessage{failed to patch}}
\makeatother

\makeatletter
\def\thm@space@setup{%
	\thm@preskip=\parskip \thm@postskip=0pt
}
\makeatother


\usepackage{etoolbox}
\usepackage{tikz}
\usetikzlibrary{tikzmark}
\usetikzlibrary{calc}

\errorcontextlines\maxdimen

\newcommand{\ALGtikzmarkcolor}{black}
\newcommand{\ALGtikzmarkextraindent}{4pt}
\newcommand{\ALGtikzmarkverticaloffsetstart}{-.5ex}
\newcommand{\ALGtikzmarkverticaloffsetend}{-.5ex}
\makeatletter
\newcounter{ALG@tikzmark@tempcnta}

\newcommand\ALG@tikzmark@start{%
	\global\let\ALG@tikzmark@last\ALG@tikzmark@starttext%
	\expandafter\edef\csname ALG@tikzmark@\theALG@nested\endcsname{\theALG@tikzmark@tempcnta}%
	\tikzmark{ALG@tikzmark@start@\csname ALG@tikzmark@\theALG@nested\endcsname}%
	\addtocounter{ALG@tikzmark@tempcnta}{1}%
}

\def\ALG@tikzmark@starttext{start}
\newcommand\ALG@tikzmark@end{%
	\ifx\ALG@tikzmark@last\ALG@tikzmark@starttext
	\else
	\tikzmark{ALG@tikzmark@end@\csname ALG@tikzmark@\theALG@nested\endcsname}%
	\tikz[overlay,remember picture] \draw[\ALGtikzmarkcolor] let \p{S}=($(pic cs:ALG@tikzmark@start@\csname ALG@tikzmark@\theALG@nested\endcsname)+(\ALGtikzmarkextraindent,\ALGtikzmarkverticaloffsetstart)$), \p{E}=($(pic cs:ALG@tikzmark@end@\csname ALG@tikzmark@\theALG@nested\endcsname)+(\ALGtikzmarkextraindent,\ALGtikzmarkverticaloffsetend)$) in (\x{S},\y{S})--(\x{S},\y{E});%
	\fi
	\gdef\ALG@tikzmark@last{end}%
}

\apptocmd{\ALG@beginblock}{\ALG@tikzmark@start}{}{\errmessage{failed to patch}}
\pretocmd{\ALG@endblock}{\ALG@tikzmark@end}{}{\errmessage{failed to patch}}
\makeatother

\algblock[with]{With}{EndWith}
\algblockdefx[With]{With}{EndWith}%
[1]{\textbf{with} #1 \textbf{do}}%
{}

\makeatletter
\ifthenelse{\equal{\ALG@noend}{t}}%
{\algtext*{EndWith}}
{}%
\makeatother

\newcommand{\cE}{{\cal E}}


\author{
 {Yossi Azar\thanks{Department of Computer Science, Tel-Aviv University,
    Tel-Aviv, Israel. Email: {\tt azar@tauex.tau.ac.il}.}}
 \and
 {Debmalya Panigrahi\thanks{Department of Computer Science, Duke University, Durham, NC. Email: {\tt debmalya@cs.duke.edu}.}}
 \and
 {Or Vardi\thanks{Department of Computer Science, Tel-Aviv University,
    Tel-Aviv, Israel. Email: {\tt orvardi@mail.tau.ac.il}.}} 
}

\date{}

\bibliographystyle{alpha}

\title{Nearly Tight Bounds for the Online Sorting Problem}

\fi

\begin{document}

\maketitle

\begin{abstract}
   In the online sorting problem, a sequence of $n$ numbers in $[0, 1]$ (including $\{0,1\}$) have to be inserted in an array of size $m \ge n$ so as to minimize the sum of absolute differences between pairs of numbers occupying consecutive non-empty cells. Previously, Aamand {\em et al.} (SODA 2023) gave a deterministic 
$2^{\sqrt{\log n} \sqrt{\log \log n + \log (1/\varepsilon)}}$-competitive algorithm when $m = (1+\varepsilon) n$ for any $\varepsilon \ge \Omega(\log n/n)$. They also showed a lower bound: with $m = \gamma n$ space, the competitive ratio of any deterministic algorithm is at least $\nicefrac{1}{\gamma}\cdot\Omega(\log n / \log \log n)$. This left an exponential gap between the upper and lower bounds for the problem.

In this paper, we bridge this exponential gap and almost completely resolve the online sorting problem. First, we give a deterministic $O(\log^2 n / \varepsilon)$-competitive algorithm with $m = (1+\varepsilon) n$, for any $\varepsilon \ge \Omega(\log n / n)$. Next, for $m = \gamma n$ where $\gamma = [O(1), O(\log^2 n)]$, we give a deterministic $O(\log^2 n / \gamma)$-competitive algorithm. In particular, this implies an $O(1)$-competitive algorithm with $O(n \log^2 n)$ space, which is within an $O(\log n\cdot \log \log n)$ factor of the lower bound of $\Omega(n \log n / \log \log n)$. Combined, the two results imply a close to optimal tradeoff between space and competitive ratio for the entire range of interest: specifically, an upper bound of $O(\log^2 n)$ on the product of the competitive ratio and $\gamma$ while the lower bound on this product is $\Omega(\log n / \log\log n)$. We also show that these results can be extended to the case when the range of the numbers is not known in advance, for an additional $O(\log n)$ factor in the competitive ratio.
\end{abstract}

\pagenumbering{gobble}
\clearpage
\pagenumbering{arabic}
\setcounter{page}{1}

\section{Introduction}\label{sec:intro}Sorting is a classic problem in algorithm design, where given an array of $n$ elements, the algorithm is required to reorder them in a non-decreasing (or non-increasing) sequence. Recently, a natural online variant of this problem was introduced by Aamand {\em et al.}~\cite{AamandA0K23} where the elements are not known in advance but arrive over time and have to be inserted in a way that achieves as close to a sorted order as possible. Formally, we are given an online sequence of numbers $a_1, a_2, \ldots, a_n$, where $a_t \in [0, 1]$ for all $t \in [n]$, and the algorithm has to insert $a_t$ at time $t$ in a given array $A$ with $m$ cells. Clearly, $m$ must be at least $n$ to avoid collisions, but in general, it can be larger than $n$; we denote $\gamma := m/n$. The cost of the solution $c$ is given by the sum of absolute differences between consecutive non-empty cells of the array. Clearly, the cost of the optimal (sorted) sequence is the difference between the largest and smallest of the values. Following Aamand {\em et al.}~\cite{AamandA0K23}, we assume that a pair of sentinel values $0$ and $1$ are present and occupy the first and last cells of the array -- this fixes the cost of the optimal solution at $1$. (We also consider the case when the range of elements is not known in advance later in the paper.) Given this setup, the main question is:
\begin{quote}
{\em
``what is the best tradeoff between the size of the array $m$ and the cost of the online solution $c$, as a function of the number of elements $n$?''
}
\end{quote}

Tight upper and lower bounds on the cost are known when the array has no extra space, i.e., $m=n$. In this case, there is a deterministic algorithm that achieves a cost of $O(\sqrt{n})$ and this is the best achievable~\cite{AamandA0K23}. (In follow up work~\cite{AbrahamsenB0K024}, this bound was shown to be tight for randomized algorithms as well.) The situation is less clear when $m > n$, i.e., $\gamma > 1$. For $\gamma = 1+\eps$ (for any $\eps = \Omega(\log n / n)$), Aamand {\em et al.}~\cite{AamandA0K23} showed that there is a deterministic algorithm with cost $c = 2^{\sqrt{\log n}\sqrt{\log\log n + \log(1/\eps)}}$. In contrast, the best lower bound states that $\gamma\cdot c = \Omega(\log n / \log \log n)$~\cite{AamandA0K23}. In particular, if $\gamma = 1+\eps$, then the lower bound on the cost is $\approx \log n$ 
but the upper bound is $\approx 2^{\sqrt{\log n}}$. Bridging this exponentially large gap was an interesting question left open in \cite{AamandA0K23}. 

Another interesting question is whether we can use more space to reduce the cost of the solution. It is easy to show that if we insist on a cost of $1$, i.e., that the array be sorted, then $m$ must be exponential in $n$. But, if we allow $O(1)$ cost, then a simple argument already shows that $O(n^2)$ space is sufficient. For this, we assign $n$ contiguous cells in the array to each multiple of $1/n$. To insert $a_t$, we go to the next available cell allocated to the multiple of $1/n$ that is closest to $a_t$ (breaking ties arbitrarily). This has cost $\le 2$ because any pair of elements in adjacent non-empty cells differ by $\le 2/n$. But, is $O(n^2)$ space necessary for $O(1)$ cost, or can we use even less space? Note that the lower bound of $\gamma\cdot c = \Omega(\log n / \log \log n)$ only asserts that $\Omega(n \log n / \log \log n)$ space is necessary for $O(1)$ cost. More generally, can we get a smooth tradeoff of space used and cost incurred, between the two extremes of $(1+\eps) n$ space and $O(1)$ cost? 

\paragraph{Our Results.}
In this paper, we almost completely resolve the online sorting problem for the entire range between $(1+\eps) n$ space and $O(1)$ error. 
\begin{itemize}
    \item[-] For any $\eps \in [O(\log n/n), O(1)]$, we give a deterministic algorithm with a cost of $O(\log^2 n / \eps)$ for an array with $(1+\eps) n$ space. 
    
    Note that this result exponentially improves over the best known previous bound of $c = 2^{\sqrt{\log n}\sqrt{\log\log n + \log(1/\eps)}}$ due to Aamand {\em et al.}~\cite{AamandA0K23} and is within a quadratic factor of the lower bound of $\gamma\cdot c = \Omega(\log n / \log \log n)$.
    
    \item[-] For any $\gamma \in [O(1), O(\log^2 n)]$, we give a deterministic algorithm with a cost of $O(\log^2 n / \gamma)$ for an array with $\gamma n$ space. In particular, this implies an $O(n \log^2 n)$-space, $O(1)$-cost algorithm for this problem. 
    
    To the best of our knowledge, there is no previous result that achieves $O(1)$ cost, although as we outlined earlier, this is easy to achieve with $O(n^2)$ space. So, our result should be seen as improving the space bound for $O(1)$ cost from $O(n^2)$ to $O(n\log^2 n)$, which is within $\approx \log n$ factor of the lower bound of $\Omega(n \log n / \log \log n)$. In addition, we provide a smooth tradeoff between these two extremes of $(1+\eps) n$ space and $O(1)$ cost, and show an upper bound of $\gamma\cdot c = O(\log^2 n)$. Coupled with the lower bound of $\gamma\cdot c = \Omega(\log n / \log \log n)$, this shows that $\Omega(\log n / \log \log n) \le \gamma\cdot c \le O(\log^2 n)$ for the entire range of costs $c \in [O(1), O(\log^2 n)]$.
\end{itemize}

We also consider the more general setting when the range of the elements is not known in advance. In this case, the {\em competitive ratio} of an algorithm is given by the ratio between the algorithm's objective and the range of the elements (i.e. the optimal objective). Note that the competitive ratio corresponds to cost for unit range. We recover our results for unit range with the same space bounds in this more general setting, while losing an additional factor of $O(\log n)$ in the competitive ratio. We give the following results when the range of elements is unknown to the algorithm:
\begin{itemize}
    \item[-] For any $\eps \in [O(\log n/n), O(1)]$, we give a deterministic algorithm with a cost of $O(\log^3 n / \eps)$ for an array with $(1+\eps) n$ space. 

    \item[-] For any $\gamma \in [O(1), O(\log^2 n)]$, we give a deterministic algorithm with a cost of $O(\log^3 n / \gamma)$ for an array with $\gamma n$ space. 
\end{itemize}

\paragraph{Our Techniques.} We now give an overview of our main techniques. We describe these for the simpler case when the range of elements is exactly $[0,1]$. At the end of this discussion, we outline the additional ideas required to extend our results to the case when the range is unknown to the algorithm. Our main technical contribution is to introduce two new data structures {\em elementary trees} and {\em composite trees} for the online sorting problem. Before describing these data structures, we remark that these are {\em virtual} trees, i.e., while these help describe our online sorting algorithms for inserting elements into the array, there is no tree data structure being actually used to store elements. 

We start by describing elementary trees. An elementary tree is simply a complete binary tree whose leaves correspond to the array cells. We label each node of the elementary tree by a range that corresponds to the range of elements stored in the leaves underneath. Since the entire array accommodates all the elements, we start by labeling the root $[0, 1]$. Recursively, we would now like to allocate the left and right halves of the array to half-intervals $[0, 1/2]$ and $[1/2, 1]$. But, doing so statically might be wasteful in terms of space. E.g., if all the elements are in $[0, 1/2]$, then the right half-array would go unused. Instead, we do a dynamic allocation of the half-arrays to half-intervals. Suppose the initial element is in $[1/2, 1]$. Then, we label the left child of the root $[1/2, 1]$. Later, if we get an element from $[0, 1/2]$, we label the right child $[0, 1/2]$. We emphasize that a left child can be labeled by the right half of the parent's interval, and vice-versa. Moreover, it might also be the case that before getting an element from $[0, 1/2]$, we get an element from $[1/2, 1]$ that we cannot recursively insert in the left half-array. In this case, we will label the right child of the root $[1/2, 1]$ and insert the element in the right half-array. This added flexibility that allows both children to be labeled by the same half of the parent's interval dynamically adapts the space allocation to density of elements in different parts of the interval, thereby reducing wasted space. We continue this labeling scheme recursively to nodes further down in the tree, until we get to a leaf where the element is actually inserted. 

Our first result that uses $(1+\eps)n$ space only employs elementary trees. The idea is simple: we insert elements in a sequence of elementary trees, adding a new tree every time an element cannot be inserted into any of the previous trees. To obtain the space bound, we focus on the unused subarrays, which correspond to maximal empty subtrees of the elementary trees. We show the key property that any two subtrees of the same height (possibly in different elementary trees) that have empty child subtrees must have disjoint range labels (at their roots). This gives us a powerful tool to bound the empty space across all the elementary trees. By choosing the height of the trees carefully, we show that at most $\frac{\eps n}{\log n}$ space is left unused in subtrees of any given height, which yields an overall bound of $\eps n$ on the total unused space.

In composite trees, the goal is to reduce the cost of the solution compared to elementary trees by using more space in the data structure. Because the space allocation is entirely dynamic in an elementary tree, the elements can be completely out of sorted order based on the online arrival sequence. The other extreme is the $O(n^2)$-space data structure that we described earlier, where we statically allocate $n$ cells to each sub-interval of length $1/n$. This ensures that elements are essentially in sorted order (up to $1/n$), but all but one of these statically allocated sub-arrays might be left unused. A hybrid between these two extremes is to make a small static allocation to each sub-interval, and if any sub-interval has more elements than its allocated space, then use elementary trees to handle the overflow. A composite tree is essentially a recursive implementation of this hybrid structure. It is a complete binary tree, where a node at level $h$ is an elementary tree of height $h$. The interval labels of the composite tree nodes, i.e. those of root nodes in the corresponding elementary trees, are static, but those of non-root nodes in the elementary trees are dynamic. The algorithm first tries to insert an arriving element at the leaf node in the composite tree that is designated for the sub-interval containing the element. If this subarray is full, then the algorithm attempts to insert the element in the elementary tree at the parent of this leaf node. In this manner, the algorithm moves up from the leaf node of the composite tree along the path to the root until the element has been successfully inserted in an elementary tree on this path.

We use composite trees in two ways. For the data structure that uses between $O(n)$ and $O(n\log n)$ space, we use a single composite tree followed by a sequence of elementary trees, which we call a tree ensemble. The algorithm tries to insert each element in the composite tree, failing which it inserts the element greedily in the elementary trees. When we are allowed more space, namely between $O(n\log n)$ and $O(n\log^2 n)$ space, then we use an array of composite trees only. In this case, we statically segment the entire range $[0, 1)$ into smaller subranges, each of which is allocated a distinct composite tree. An arriving element is inserted in the unique composite tree allocated for the subrange containing the element. This strategy of static allocation helps reduce the cost of the allocation down to a constant, as desired, but the space allocation is larger since all but one of the composite trees might end up being empty.

Finally, we outline the additional ideas required for the extensions to elements with unknown range. It is natural to use a ``guess-and-double'' strategy. Every time an element is outside the current guessed range, the algorithm revises the guessed range such that the new element is included in the new range and it doubles the old range. Correspondingly, a new copy of the data structure for fixed range is added. This strategy has the advantage that the costs incurred by the algorithm form a geometric series and therefore, is comparable to the final guess. Recovering the space bounds, however, is more complicated. Na\"ively, the algorithm allocates a new copy of the entire data structure for every guessed range. It is possible that the guess is updated by every new element, thereby incurring a $\Theta(n)$ overhead in space. To alleviate this problem, we run an ``inner'' guess-and-double loop on the number of elements to be inserted in the data structure for the current range. Whenever the number of elements exceeds the current guess, we create a new copy of the data structure while doubling the guess on the number of elements. Since the doubling happens at most $O(\log n)$ times, this increases the cost by a factor of $O(\log n)$. The total space used across these iterations of the inner loop form a geometric series and is dominated by the final space allocation. This incurs a constant space overhead, which is sufficient for $\gamma\in [O(1), O(\log^2 n)]$. To obtain a $(1+\eps)n$-space data structure for unknown range, we need one more idea since this data structure cannot tolerate constant overhead in space. Instead of using the fixed range data structure in a black box manner, we actually use its properties for this extension. In particular, we show that it allocates space in small enough quanta that the overall unused space across the different iterations of the guess-and-double loops can be bounded by $\eps n$. This gives a total space bound of $(1+\eps)n$.

\paragraph{Pre-processing.} We now give a simple pre-processing step that will simplify the recursive description of our algorithms. Recall that all elements are in the range $[0, 1]$. If the value of an element is precisely $1$, we change it to $1-1/n$ before applying the online sorting algorithm on it. Note that this only changes the value of the element by at most $1/n$; across all the $n$ elements, this adds a cost of at most $2$ to the online sorting solution. The advantage is that we can now assume that all elements are in the half-open interval $[0, 1)$, which can be recursively split into disjoint half-open intervals (the latter helps describe our algorithm cleanly). The same pre-processing is valid when the elements have unknown range, except that we first scale the current guess on the range down to $[0, 1]$ before applying this transformation.

\paragraph{Related Work.}
As mentioned earlier, the online sorting problem was introduced by Aamand {\em et al.}~\cite{AamandA0K23}, who gave deterministic upper and lower bounds of $c= O(\sqrt{n})$ for $m=n$, a deterministic lower bound of $\gamma\cdot c = \Omega(\log n / \log \log n)$, and a deterministic upper bound of $c = 2^{\sqrt{\log n}\sqrt{\log\log n + \log(1/\eps)}}$ for $m = (1+\eps)n$. The lower bound of $\Omega(\sqrt{n})$ for $m = n$ was later extended to randomized algorithms~\cite{AbrahamsenB0K024}. Better results are known if the elements are drawn independently and uniformly at random from $[0, 1]$. In this case, Abrahamsen {\em et al.}~\cite{AbrahamsenB0K024} showed an upper bound of $O((n \log n)^{1/4})$ with $m = n$ and of $O(1+1/\eps)$ with $m = (1+\eps) n$. The upper bound for the case \( m = n \) was recently improved by Hu~\cite{improved-stochastic-hu} to \( \log n \cdot 2^{O(\log^* n)} \) and independently by Fotakis {\em et al.}~\cite{FotakisKPT25} to $O(\log^2 n)$. To complement these upper bounds, Hu~\cite{improved-stochastic-hu} also gave 
an almost matching lower bound of \( \Omega(\log n) \) for the stochastic problem.

The online sorting problem was originally conceived as a tool to better understand online packing problems for convex shapes. This latter class includes many well-studied problems such as online strip packing~\cite{BakerS83,CsirikW97,HanIYZ07,YeHZ09}, online rectangle bin packing~\cite{CoppersmithR89,Galambos91,Vliet92,GalambosV94,CsirikFL93,CsirikV93,SeidenS03,HanCTZZ11}, etc. In a different direction, the online sorting problem can be viewed as designing a minimum length tour on points arriving online on the line metric in the interval $[0, 1]$. This view naturally generalizes to other metric spaces, where it defines an online version of the traveling salesman problem (TSP). This problem was studied by Abrahamsen {\em et al.}~\cite{AbrahamsenB0K024} for the uniform metric, where they showed tight upper and lower bounds of $\Theta(\log n)$ on the competitive ratio. They also posed the question for general metrics, which was settled very recently by Bertram who showed an upper bound of $O(\sqrt{n})$ for all metric spaces~\cite{bertram2025onlinemetrictsp}. Given the lower bound on the line metric, this result is asymptotically tight. 

\paragraph{Independent and Concurrent Work.}
We have been made aware of (unpublished) independent and concurrent work by Nirjhor and Wein~\cite{NirjhorW25} on the same problem that we study in this paper. They obtain a cost bound of $(\log n)^{O(\log \log n + \log (1/\eps))}$ with $(1+\eps)n$-space. In contrast, we obtain a cost bound of $O(\log^2 n / \eps)$ for the same space bound in this paper.

\paragraph{Organization.} \Cref{sec:elementary} introduces one of our two main data structures, an {\em elementary tree}. In \Cref{sec:small-space}, we use elementary trees to define the online sorting algorithm with $m = (1+\eps)n$ space. We introduce our second main data structures, {\em composite trees}, in \Cref{sec:composite}. We then use elementary and composite trees to define {\em tree ensembles} that are used in our online sorting algorithm for $m \in [O(n), O(n\log^2 n)]$ space in \Cref{sec:small-error-part1,sec:small-error-part2}. We give the extensions to the setting with unknown range of elements in \Cref{sec:doubling}. We conclude with some closing remarks in \Cref{sec:closing}.

\section{Core Primitive: Elementary Trees}\label{sec:elementary}The data structures that we propose for the online sorting problem are based on two new primitives, {\em elementary trees} and {\em composite trees}. We introduce elementary trees in this section, and composite trees in \Cref{sec:composite}.


     An {\em elementary tree} of height $h$ is an array data structure of size $2^h$ that has an associated complete binary tree of height $h$. We emphasize that the tree is {\em virtual} in the sense that it will be used for defining how elements are inserted in the array, but the actual data structure is just an array and not a tree. The leaves of the tree, at height $0$, are each associated with a unique array cell, enumerated left to right as shown in Fig.~\ref{fig:elementary-tree}. So, all insertions of elements are at the leaves of the tree. The root of the tree is at height $h$, its children are at height $h-1$, their children at height $h-2$, and so on.

\begin{figure}[tbh]
\centering
\includegraphics[width=0.8\linewidth]{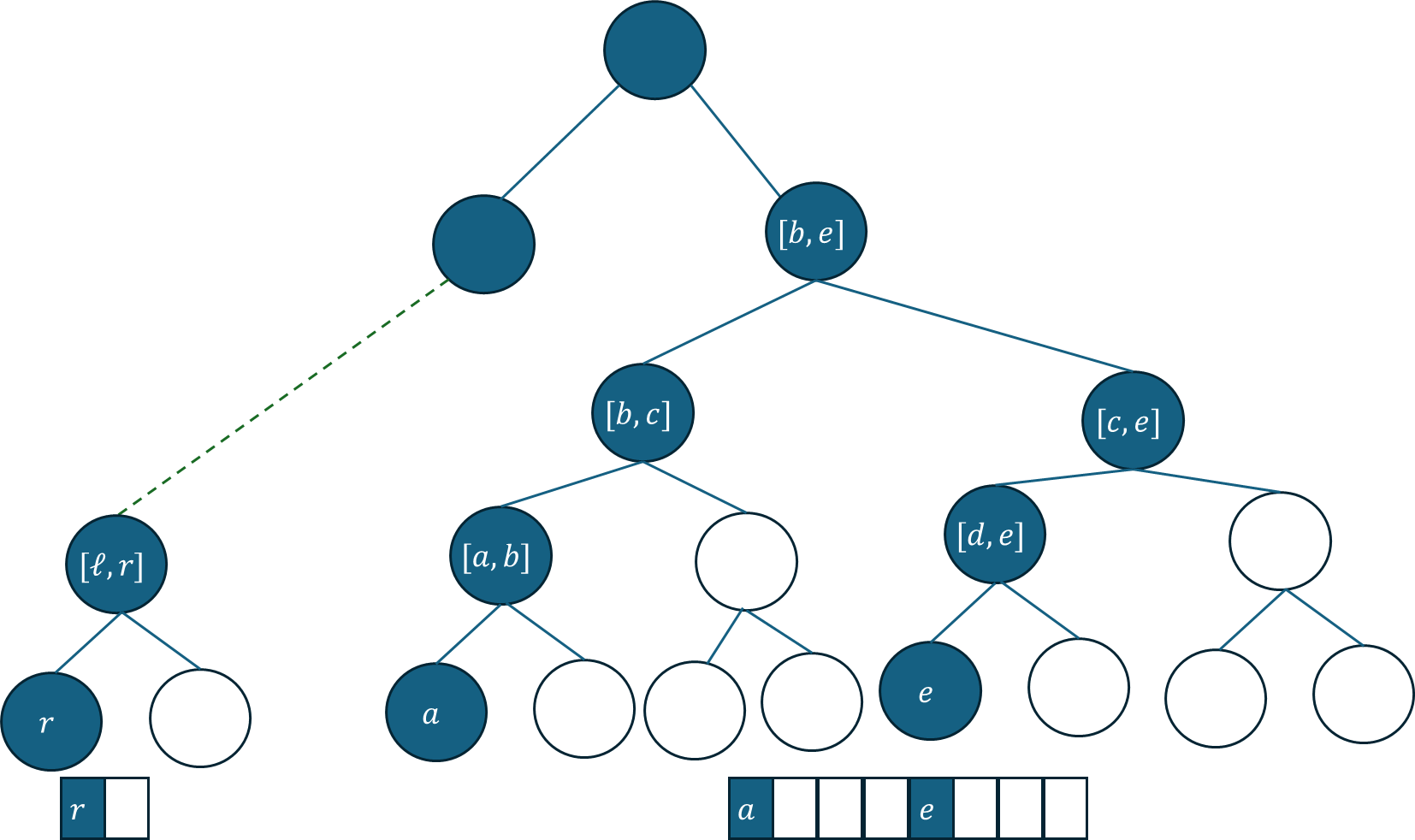}
    \caption{Illustration of an elementary tree. The colored cells in the array represent occupied cells. Similarly, the colored nodes in the tree represent marked nodes.}
\label{fig:elementary-tree}    
\end{figure}

\begin{figure}[tbh]
    \centering
    \subfloat[\centering Before the insertion]{{\includegraphics[width=.4\linewidth]{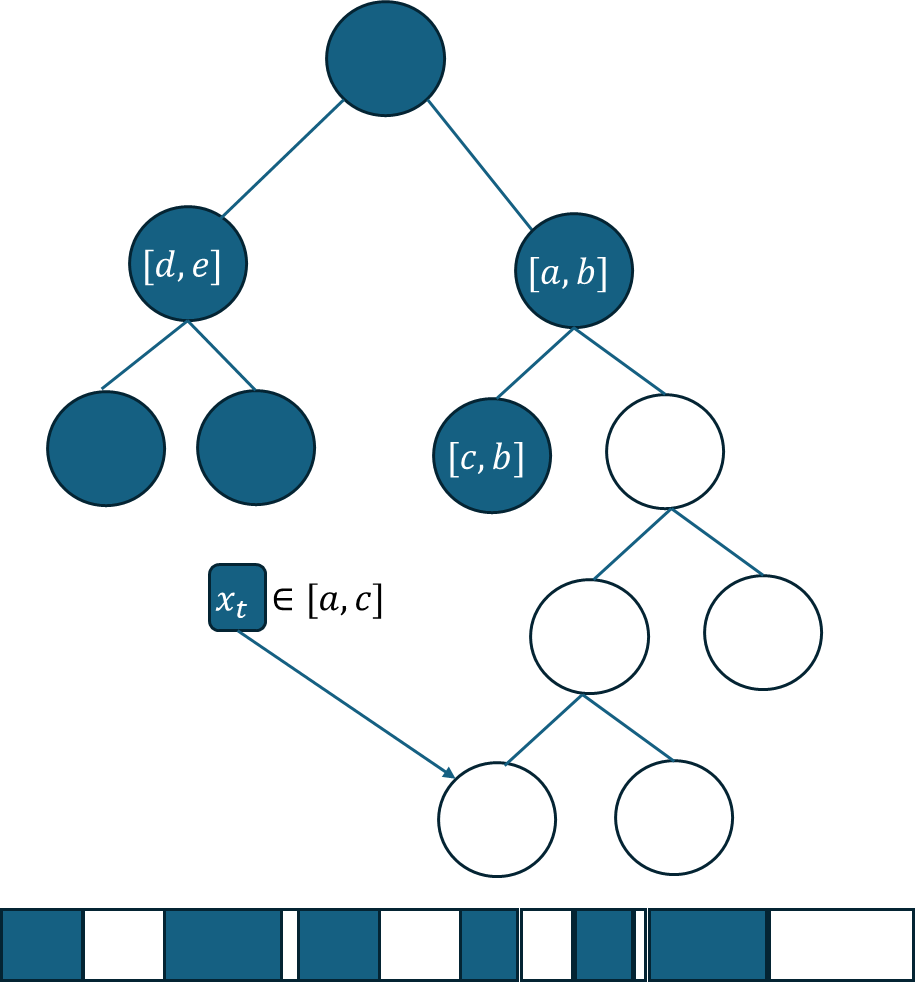} }}%
    \hfill
    \subfloat[\centering After the insertion]{{\includegraphics[width=.4\linewidth]{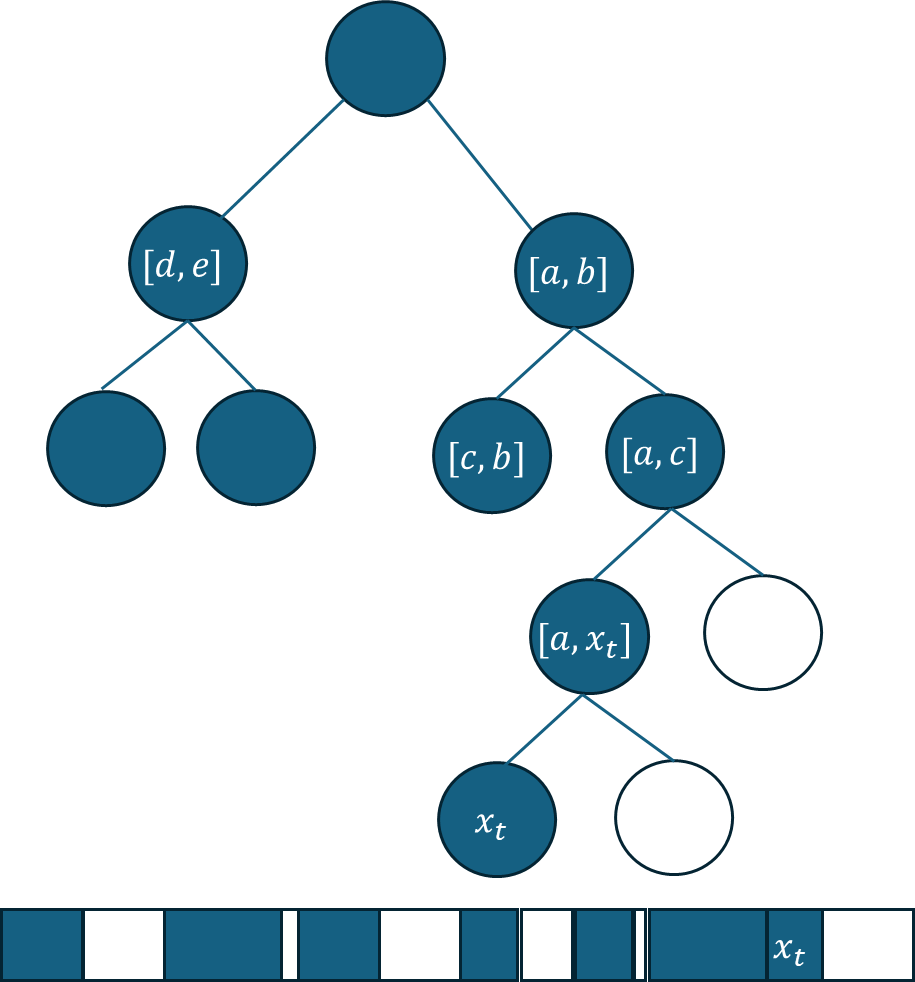} }}%
    \caption{Insertion of an element $x_t$ in an elementary tree. The filled parts of the array represent occupied regions. Similarly, the colored nodes in the tree represent marked nodes.}%
    \label{fig:insertion to elementary}
\end{figure}
     \paragraph{Insertion Algorithm and Node Labels.} 
     We now describe the online algorithm that inserts elements in an elementary tree data structure. In addition to inserting elements, the algorithm also labels nodes of the elementary tree with sub-intervals of $[0, 1)$. We say that a node is {\em marked} if it has already been labeled, and {\em unmarked} otherwise. The algorithm maintains the invariant that the label of a marked child node is either the left half or the right half of the label of its parent node. (A node cannot be marked before its parent.) Note that the two children of a node can be marked with the same label or different ones, and one child or both can be unmarked. For a marked node $v$, we say that an element $x$ is {\em admissible} at $v$ if the value $x$ is contained in the label of $v$; otherwise, $x$ is {\em inadmissible} at $v$. 
 
     We assume that the root node is labeled with an interval that includes all elements that will be inserted into the elementary tree. Therefore, the root node is marked and every element that the algorithm tries to insert into the elementary tree is admissible at the root. Let $\ins(v, x)$ denote the (recursive) algorithm that tries to insert $x$ in the subtree rooted at a node $v$. (So, we call $\ins(r, x)$ for the root node $r$ when we insert $x$ into the elementary tree.) The procedure $\ins(v, x)$ either terminates with ``success'' if $x$ is inserted at a leaf of the subtree rooted at $v$, or ``failure'' if it fails to do so. Eventually, $\ins(r, x)$ also terminates with either success or failure, corresponding to $x$ being inserted in the elementary tree or failing to do so, respectively. 
     
     If $v$ is not a leaf node, then let $v_\ell$ and $v_r$ respectively denote the left and right child of $v$. We now define $\ins(v, x)$ as follows:
     \begin{itemize}
        \item[-] If $v$ is a leaf node and is unmarked, then label $v$ with the left or right half of its parent's label that contains $x$, write $x$ in the array cell at the leaf node $v$, and return success.
        \item[-] If $v$ is marked and $x$ is inadmissible at $v$, then return failure.
        \item[-] If $v$ is marked and $x$ is admissible at $v$, then call $\ins(v_\ell,x)$. If it returns success, return success. Otherwise, if it returns failure, then call $\ins(v_r,x)$. If this returns success, then return success. Otherwise, if it  returns failure as well, then return failure.
        \item[-] If $v$ is unmarked, then label $v$ with the left or right half of its parent's label that contains $x$ and call $\ins(v_\ell,x)$.
    \end{itemize}
     Note that we are essentially performing an in-order traversal of the elementary tree with the following invariant: 
     \begin{claim}\label{cl:label}
        If an element $x$ is written in the array cell at a leaf node $v$ of an elementary tree, then $x$ is admissible at every  ancestor of $v$.
    \end{claim}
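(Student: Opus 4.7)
The plan is to trace the recursive calls of $\ins$ that culminated in $x$ being written at the leaf $v$, and check admissibility at each ancestor of $v$ along the way. The first, easy preliminary observation is that the algorithm never modifies an existing label: each of the four cases of $\ins(u,x)$ either assigns a label to an unmarked $u$, or leaves $u$'s label untouched. Hence it suffices to argue that $x$ lies in the label of each ancestor at the moment that label is first set (or, for nodes that were already marked, at the moment of the corresponding recursive call).

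For $x$ to be written at $v$, the call $\ins(r,x)$ on the root $r$ must have triggered a sequence of recursive calls $\ins(u,x)$ at every ancestor $u$ of $v$, each of which recursed into the child of $u$ lying on the root-to-$v$ path. I would then proceed by induction on the depth of $u$. For the base case, the root $r$ is assumed marked with an interval that contains every element inserted into the elementary tree, so $x$ is admissible at $r$. For the inductive step, consider any non-root ancestor $u$ of $v$ on that path and inspect the four cases of the algorithm at the moment $\ins(u,x)$ was invoked. If $u$ was already marked, then the algorithm recurses only in the ``marked and admissible'' branch — if $x$ had been inadmissible at $u$, the call would have returned failure without descending, contradicting the fact that the descent reached $v$. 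If instead $u$ was unmarked, then the algorithm explicitly labels $u$ with the half of its parent's label that contains $x$, so $x$ becomes (and stays) admissible at $u$. The leaf case $u=v$ is handled symmetrically by the first bullet of the algorithm: $v$ is labeled with the half of its parent's label containing $x$ before $x$ is written.

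Combining these observations, $x$ is admissible at every ancestor of $v$, which is exactly the claim. I do not anticipate any real obstacle: the only thing to be mindful of is the distinction between ``admissible at the time of the recursive call'' and ``admissible afterwards,'' but since labels are permanent once assigned, the two notions coincide, and the argument reduces to a direct case analysis of the four branches of $\ins$.
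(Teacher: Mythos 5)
Your proof is correct and takes essentially the same approach as the paper's: both trace the chain of recursive calls from the root down to $v$ and observe that $\ins(v_{i+1},x)$ descends into $v_i$ only when $x$ is (or has just become) admissible at $v_{i+1}$, either because $v_{i+1}$ was already marked and the ``marked and admissible'' branch was taken, or because $v_{i+1}$ was unmarked and was then labeled with the half containing $x$. Your added observation that labels are permanent once assigned is a useful clarification implicit in the paper's argument, and your mention of the leaf case $u=v$ is harmless but unnecessary since the claim concerns only proper ancestors of $v$.
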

    \begin{proof}
         Let us denote the vertices on the path from $v$ to the root $r$ as $v_0 = v, v_1, v_2, \ldots, v_h = r$. Note that $\ins(v, x)$ must have been called to write $x$ in the array cell at $v$. The only procedure that can call $\ins(v, x)$ is $\ins(v_1, x)$, which in turn could have only been called by $\ins(v_2, x)$, and so on. Moreover, $\ins(v_{i+1}, x)$ calls $\ins(v_i, x)$ only if $x$ is admissible at $v_{i+1}$. This establishes the lemma.
    \end{proof}

    The following lemma bounds the cost in an elementary tree:
    \begin{lemma}[Elementary Tree Cost Lemma]\label{lem:elementary-cost}
        Suppose a set of elements is inserted in an elementary tree $A$ of height $h$. Let $\ell$ denote the length of the interval label of the root of $A$. Then, the total cost of the elements in $A$ is at most $\ell h$.
    \end{lemma}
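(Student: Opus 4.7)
The plan is to bound the total cost by a charging argument on the virtual tree. For every pair of consecutive non-empty cells, consider the two corresponding leaves and their lowest common ancestor (LCA) $w$ in the elementary tree. By Claim~\ref{cl:label}, each written element is admissible at every one of its ancestors, so every ancestor on each root-to-leaf path (including $w$) is marked and both elements lie inside the interval labeling $w$. By the labeling invariant (each marked child's label is a half of its parent's label) together with the fact that the root's label has length $\ell$, a node at height $k$ has an interval label of length exactly $\ell/2^{h-k}$. Consequently a pair whose LCA is at height $k$ contributes at most $\ell/2^{h-k}$ to the cost.

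Next I would observe that each internal node $w$ of the tree is the LCA of at most one pair of consecutive non-empty leaves. Indeed, in the left-to-right (in-order) enumeration of the non-empty leaves lying in $w$'s subtree, the only consecutive pair whose LCA equals $w$ is the one consisting of the rightmost non-empty leaf of $w$'s left subtree together with the leftmost non-empty leaf of $w$'s right subtree; every other consecutive pair within $w$'s subtree is contained entirely in one of the two child subtrees and therefore has LCA strictly below $w$.

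Combining the two observations and summing over heights, the total cost is at most
\[
\sum_{k=1}^{h} (\text{number of nodes at height } k)\cdot \frac{\ell}{2^{h-k}} \;\le\; \sum_{k=1}^{h} 2^{h-k}\cdot \frac{\ell}{2^{h-k}} \;=\; \ell h,
\]
which is the stated bound. I do not anticipate a genuine obstacle here; the only subtle point is ensuring that the LCA of any two non-empty leaves is itself marked, so that ``length of its interval label'' is well defined, and this is exactly what Claim~\ref{cl:label} gives. The proof is really a per-height telescoping where the exponentially many LCAs at height $k$ each have exponentially small label length, contributing $\ell$ per level and $\ell h$ in total.
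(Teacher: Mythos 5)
Your proposal is correct and follows essentially the same argument as the paper: charge each consecutive pair to its LCA, observe each internal node is charged at most once with at most the length of its interval label, and sum per-level contributions (each level contributing $\ell$) over $h$ levels. The one thing you make slightly more explicit than the paper is why each internal node is the LCA of at most one consecutive pair; the paper asserts this without elaboration.
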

\begin{proof}
    Let $x_i, x_{i+1}$ be a pair of elements that occupy consecutive non-empty cells in the array. Let $v$ denote the least common ancestor of $x_i, x_{i+1}$ in $A$, and let $v_\ell, v_r$ be the left and right child of $v$. Then, $x_i$ is the rightmost non-empty leaf in the subtree under $v_\ell$ and $x_{i+1}$ is the leftmost non-empty leaf in the subtree under $v_r$. By \Cref{cl:label}, both $x_i, x_{i+1}$ are contained in the interval label of $v$. Thus, their absolute difference is at most the length of the interval label of $v$. We charge $|x_i - x_{i+1}|$ to $v$. Every non-leaf node in the elementary tree is charged at most once, and the charge is at most the length of its interval label. The lemma follows since the root has an interval label of length $\ell$, its  2 children have interval labels of length $\ell/2$ each, their 4 children have interval labels of length $\ell/4$ each, and so on.
\end{proof}

\section{A $(1+\eps) n$-space, $O\left(\frac{\log^2 n}{\eps}\right)$-cost Data Structure}\label{sec:small-space}

\subsection{Description of the Data Structure}
The data structure comprises a series of identical elementary trees $A_0, A_1, \ldots$, each of height 
\[
    H := \lfloor\log (\eps n) - \log \log n\rfloor\footnote{All logarithms are with base $2$ unless otherwise mentioned.}  
\]    
and with the roots labeled $[0,1)$.
The number of elementary trees will be calculated indirectly via the space bound.

\begin{remark}
    \label{rem:epsilon-cannot-be-too-small}
    Here, we use the fact that $\eps \ge \log n / n$, since the height $H$ of the elementary trees must be non-negative.
\end{remark}

\subsection{The Online Sorting Algorithm}

We now give the online sorting algorithm, i.e., the algorithm that inserts elements into the data structure described above. Let $x_t$ be the element being inserted. We iterate over the arrays in order. I.e., we try to insert $x_t$ in $A_0$, then $A_1$, then $A_2$, etc., until the element is successfully inserted. Since we have not fixed the number of elementary trees in advance, the algorithm never fails to insert an element. The algorithm for inserting an element $x_t$ in a specific elementary tree $A_k$ is as given in \Cref{sec:elementary}.

\subsection{Space Bound}

Note that we did not specify the number of elementary trees in the description of the data structure. Here, we bound the number of elementary trees, and hence the total space used. To do so, we first categorize nodes in the elementary trees as {\em full}, {\em partial}, or {\em empty}. A node $v$ is empty if it is unmarked at the end of the algorithm. (Although root nodes are marked at the outset, we call a root node empty even if it is marked as long as the tree does not contain any element.)
A non-leaf node is full if neither of its children is empty, and partial if exactly one child is empty. (If both children are empty, then the parent node must also be empty. This is because if it is a non-root node, then it cannot be marked and is therefore empty, while if it is a root node with two empty children, then it is empty by definition.) A leaf node is said to be full if it is not empty. 

Now that we have defined full, partial, and empty nodes, we can state the main property of the online sorting algorithm that results in the space bound.
\begin{lemma}\label{lem:small-space}
    Fix any height $h$. Then, all partial nodes at height $h$ are labeled by disjoint intervals.
\end{lemma}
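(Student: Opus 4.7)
I would prove this by contradiction. Suppose two distinct partial nodes $v_1, v_2$ at height $h$ share a label $I$ (labels at a given height are dyadic intervals of the same fixed length, so any overlap forces equality). WLOG let $v_1$ be marked before $v_2$, and write $y_1, y_2 \in I$ for the elements whose processing first labels $v_1, v_2$. Since the first call $\ins(v,\cdot)$ on an unmarked non-leaf $v$ immediately labels $v$ and recurses into $v_\ell$, every partial node's empty child is its right child, so $v_{1,r}$ and $v_{2,r}$ are both unmarked at the end. The target contradiction will be to force $\ins(v_{1,r},\cdot)$ to be called at some point, since any such call marks $v_{1,r}$.

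The engine of the proof is an auxiliary claim I will call the \emph{Unmarked Node Success Lemma}: if $\ins(u, z)$ is invoked on a node $u$ in an elementary tree and $u$'s subtree contains an unmarked node $w$ whose eventual label $L$ contains $z$, then $\ins(u, z)$ returns success. To prove this, I would use ``failure propagation'': at any marked node where the query is admissible, failure forces both children's calls to occur and both to fail, since otherwise success bubbles up. Let $w^*$ be the topmost unmarked ancestor of $w$ on the path from $u$ (possibly $w$ itself); its parent $p$ is marked with label equal to the unique dyadic ancestor of $L$ of the appropriate length, and hence contains $z$. Similarly, every marked ancestor of $w$ between $u$ and $p$ has label containing $L$, so $z$ is admissible throughout, and failure propagation forces $\ins(p, z)$ to be called. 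But the unmarked-node branch of the algorithm applied to $\ins(w^*, z)$ mechanically descends the leftmost path of $w^*$'s subtree and returns success, and $p$'s recursion reaches $w^*$ either directly (when $w^*$ is the left child of $p$) or via the right child after the left subtree is fully explored (when $w^*$ is the right child, the left sibling is either unmarked, so success, or marked and forced to fail, after which $\ins(w^*,z)$ is called). Either way $\ins(p,z)$ succeeds, contradicting the failure assumption.

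The main argument now splits by the relative position of $v_1$ and $v_2$. Call $v_1$ \emph{earlier-explored} if it lies in a strictly earlier elementary tree, or in the left subtree of the LCA of $v_1, v_2$ in the same tree, or (the sibling sub-case) as the left child of their common parent. In these cases, processing $y_2$ must first enter $v_1$'s region and fail there, since otherwise $y_2$ would be inserted in that region rather than in $v_2$'s. Every ancestor of $v_1$ on the path from the region's root is marked by time $t_{v_1} \le t_{v_2}$, and carries a label containing $I \ni y_2$, so failure propagation forces $\ins(v_1, y_2)$ to be called and fail. Since $v_1$ is marked with $y_2$ admissible, this in turn triggers $\ins(v_{1,r}, y_2)$, marking the previously empty $v_{1,r}$, which is the desired contradiction.

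In the remaining (\emph{later-explored}) cases I instead invoke $y_1$: for $y_1$ to be routed past $v_2$'s region into $v_1$'s, the $\ins$ call on $v_2$'s region during $y_1$'s processing must have returned failure. But $v_2$ is unmarked in that region at time $t_{v_1}$ and its eventual label $I$ contains $y_1$, so the Unmarked Node Success Lemma forbids failure — contradiction. I expect the main technical obstacle to be the Unmarked Node Success Lemma itself: one must argue that the recursion \emph{actually reaches} $w^*$ without being short-circuited along the way, which relies on the invariant that marked ancestors of $w$ carry dyadic-ancestor labels of $L$ (keeping $z$ admissible throughout) combined with the decisive observation that the algorithm's unmarked-node branch always succeeds.
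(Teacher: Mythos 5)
Your proof is correct and follows essentially the same approach as the paper's: your Unmarked Node Success Lemma corresponds to the paper's Claim~\ref{cl:elem-greedy-two} (which phrases the hypothesis via the deepest marked ancestor of an empty node being admissible, rather than via an ``eventual label'' -- a formulation that also covers the never-to-be-marked empty child $v_{1,r}$, so a single lemma suffices in both directions), and your explicit earlier-/later-explored case split collapses into the paper's single argument, which first deduces the exploration order of the two regions from $x_t$'s successful insertion and then derives the contradiction at time $t'$.
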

To prove this lemma, we first prove some auxiliary claims about the algorithm.

\begin{claim}\label{cl:elem-greedy-one}
    If the online algorithm calls $\ins(v, x)$ and $v$ is empty, then $x$ will be successfully inserted at some leaf node in the subtree under $v$.
\end{claim}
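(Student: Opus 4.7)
The plan is to prove the claim by structural induction on the height $h$ of $v$. Before starting, I would fix the natural runtime reading of ``$v$ is empty'' that the statement needs: namely, either $v$ is unmarked, or $v$ is a root that was labeled at initialization but has had no element inserted into its subtree. In the second case, since the root's label is the whole range $[0,1)$, the element $x$ is automatically admissible at $v$, so this corner never blocks the recursion.

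For the base case $h=0$, $v$ is a leaf and therefore unmarked (leaves are never designated as roots). Clause one of $\ins$ then fires directly: the algorithm labels $v$ with the half of its parent's label that contains $x$, writes $x$ into the corresponding array cell, and returns success.

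For the inductive step, I would split on the form of emptiness at $v$. If $v$ is unmarked, clause four of $\ins$ labels $v$ and calls $\ins(v_\ell, x)$. If $v$ is a marked empty root, then $x$ is admissible at $v$ and clause three also begins by calling $\ins(v_\ell, x)$. The key step is to verify that $v_\ell$ is itself empty (in the runtime sense) at the moment of this recursive call. In the first subcase this is immediate from the paper's invariant that a child cannot be marked before its parent, so every descendant of an unmarked node is unmarked. In the second subcase, since no element has ever reached past $v$, no clause of $\ins$ has ever labeled a proper descendant of $v$, so $v_\ell$ is still unmarked. Applying the inductive hypothesis at $v_\ell$ (whose height is $h-1$) gives success for $\ins(v_\ell, x)$, and hence $\ins(v, x)$ returns success as well.

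The main obstacle is conceptual rather than technical: the text only defines ``empty'' at the end of the algorithm, so one must first commit to the natural mid-execution reading above and separately handle the marked-empty-root corner before the induction is even well-posed. Once that is in place, the argument is a direct walk through the four clauses of $\ins$, with no cost or counting estimates needed, and it relies only on the marking invariant already recorded in \Cref{sec:elementary}.
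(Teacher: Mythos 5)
Your proof is correct and takes essentially the same approach as the paper, which observes in one line that the recursion follows the left-child path at every node and inserts $x$ at the leftmost leaf under $v$. You have simply made that one-line observation rigorous via structural induction, and your care in distinguishing the ``unmarked'' case from the ``marked-but-empty root'' case is a reasonable way to pin down the mid-execution sense of ``empty'' that the paper uses informally.
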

\begin{proof}
    In this case, the algorithm follows the path given by the left child at each node and inserts element $x$ at the leftmost leaf in the subtree under $v$.
\end{proof}

\begin{claim}\label{cl:elem-greedy-two}
    Suppose the online algorithm calls $\ins(v, x)$ during the insertion of element $x$ for some node $v$ in an elementary tree. If there is an empty node $w$ in the subtree under $v$ and the parent of $w$ (call it $u$) is admissible for $x$, then $x$ will be successfully inserted at some leaf node in the subtree under $v$.
\end{claim}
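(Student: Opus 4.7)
The plan is to induct on the height of $v$, exploiting a nesting property of interval labels. The preliminary observation is that a marked child is always labeled with a half-interval of its parent's label, and a node can only be marked after its parent. Iterating these two facts up the tree yields that if a node $u$ is marked, then every ancestor of $u$ within the elementary tree is marked and has an interval label containing the label of $u$. In particular, any value $x$ admissible at $u$ is also admissible at every ancestor of $u$ in the same elementary tree.

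With that in hand, I would first dispose of the trivial case $w = v$: here $v$ itself is empty (either unmarked or a root with no elements yet inserted), so \Cref{cl:elem-greedy-one} immediately yields the conclusion. In the remaining case, $w$ is a proper descendant of $v$, which forces its parent $u$ to lie in the subtree rooted at $v$ (either $u = v$ or $u$ is a proper descendant of $v$). Because $u$ is marked and admissible for $x$, the preliminary observation gives that $v$ is also marked and that $x$ is admissible at $v$. Consequently, $\ins(v, x)$ executes the marked-admissible branch: it calls $\ins(v_\ell, x)$ first, and if that returns failure, $\ins(v_r, x)$.

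Let $v^*$ be the child of $v$ whose subtree contains $w$ (if $u = v$, so $w$ is already a child of $v$, set $v^* = w$ and finish via \Cref{cl:elem-greedy-one} applied to $v^*$). Then $w$ still lies in the subtree under $v^*$ with admissible parent $u$, and $v^*$ has strictly smaller height than $v$, so the inductive hypothesis guarantees that $\ins(v^*, x)$ succeeds. If $v^* = v_\ell$, this is exactly the first call the algorithm makes and we are done. If $v^* = v_r$, then either $\ins(v_\ell, x)$ already succeeds, or it fails and the algorithm proceeds to $\ins(v_r, x) = \ins(v^*, x)$, which succeeds by induction. In either situation, $x$ is inserted at a leaf under $v$.

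The main subtlety to justify is the admissibility propagation from $u$ up to $v$; without it, one could not conclude that $\ins(v, x)$ enters the branch that recursively explores $v$'s children rather than returning failure at $v$. Once that nesting argument is in place, the in-order traversal hard-coded into the insertion procedure ensures that both subtrees of $v$ are attempted before any failure is reported, so the induction closes cleanly and the claim follows.
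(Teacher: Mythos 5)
Your proof is correct and mirrors the paper's inductive argument (the paper inducts on the length of the $u$--$v$ path, you on the height of $v$; these are equivalent here, and both reduce to the child of $v$ on the path toward the empty node, then check the two children of $v$ in the order the algorithm visits them). The one thing you do that the paper leaves implicit is to justify, via the label-nesting observation, that $\ins(v,x)$ actually enters the recursive branch — i.e.\ that admissibility of $x$ at $u$ forces $v$ to be marked and admissible as well; this is a welcome clarification of a step the paper takes for granted, but it does not amount to a different approach.
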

\begin{proof}
    We prove the claim by induction on the length of the $u-v$ path. First, consider the base case $u = v$. In this case, one of the children of $v$ is $w$, which is an empty node. Let $w'$ be the other child of $v$. If $\ins(v, x)$ calls $\ins(w, x)$ before $\ins(w', x)$, then $x$ is inserted in the subtree under $w$ by \Cref{cl:elem-greedy-one}. Otherwise, $\ins(v, x)$ calls $\ins(w', x)$ before $\ins(w, x)$. If $\ins(w', x)$ terminates with success, then $x$ is inserted in the subtree under $w'$. Otherwise, $\ins(w', x)$ terminates with failure and $\ins(w, x)$ is called, which inserts $x$ in the subtree under $w$ by \Cref{cl:elem-greedy-one}. In either case, $x$ is inserted in the subtree under $v$.
    
    Now, consider the inductive case $u \not= v$. Let $v'$ be the child of $v$ that is an ancestor of $u$, and let $v''$ be the other child. If $\ins(v, x)$ calls $\ins(v', x)$ before $\ins(v'', x)$, then $x$  is inserted in the subtree under $v'$ by the inductive hypothesis. Otherwise, $\ins(v, x)$ calls $\ins(v'', x)$ before $\ins(v', x)$. If $\ins(v'', x)$ returns success, then $x$ is inserted in the subtree under $v''$. If $\ins(v'', x)$ returns failure, then $\ins(v', x)$ is called and by the inductive hypothesis, $x$ is inserted in the subtree under $v'$. In either case, $x$ is inserted in the subtree under $v$.
\end{proof} 

We use these claims to show \Cref{lem:small-space}:

\begin{proof}[Proof of \Cref{lem:small-space}]
    First, observe that any two intervals marking nodes at the same height have the same length, and the node labels form a laminar set. Therefore, any two nodes at the same height are either marked by the same interval or by completely disjoint intervals. Now, suppose two partial nodes $v, v'$ at height $h$ are marked by the same interval $I$ during the insertion of elements $x_t, x_{t'}$ respectively, for $t' \ge t$. First, we rule out $t' = t$. By \Cref{cl:elem-greedy-one}, if the online algorithm marks a node when inserting an element, then the element is inserted successfully at a leaf in the subtree under the newly marked node. Since $v, v'$ are at the same height, the subtrees under them are disjoint. Therefore, it must be that $x_t \not= x_{t'}$. Therefore, $t \not= t'$, i.e. it must be that $t' > t$.
    
    Let $u, u'$ be the highest ancestors of $v, v'$ that are not ancestors of both $v, v'$. (If $v, v'$ are in the same elementary tree, then $u, u'$ are children of their least common ancestor. If $v, v'$ are in different elementary trees, then $u, u'$ are the roots of their respective trees.) By \Cref{cl:label}, all ancestors of $v, v'$ are admissible for both $x_t, x_{t'}$. Now, let $w'$ be the deepest ancestor of $v'$ that is marked before the insertion of $x_t$. Since the entire path from $u'$ to $w'$ is admissible for $x_t$ and one of the children of $w'$ was empty, by \Cref{cl:elem-greedy-two}, the element $x_t$ would have been successfully inserted in the subtree under $u'$ if $\ins(u', x_t)$ had been called. This means that $\ins(u, x_t)$ was called before $\ins(u', x_t)$, and $\ins(u, x_t)$ inserted $x_t$ successfully in the subtree under $u$, thereby not requiring the algorithm to call $\ins(u', x_t)$. Subsequently, at time $t'$, the algorithm will maintain the same order, i.e., it calls $\ins(u, x_{t'})$ first, and $\ins(u', x_{t'})$ is called only if $\ins(u, x_{t'})$ returns failure. But, $\ins(u, x_{t'})$ cannot return failure. This is by \Cref{cl:elem-greedy-two}, because the entire path from $u$ to $v$ is admissible for $x_{t'}$, and  $v$ has an empty child node. Thus, $\ins(u, x_{t'})$ will successfully insert $x_{t'}$ in the subtree under $u$. This means node $v'$ cannot be marked during the insertion of $x_{t'}$, a contradiction.
\end{proof}
As a simple corollary of this lemma, we obtain the following property:
\begin{lemma}\label{lem:smallspace-unused-space}
    The unused space in the data structure above is at most $\eps n$ at any time. As a consequence, the data structure uses at most $(1+\eps)n$ space after inserting all $n$ elements.
\end{lemma}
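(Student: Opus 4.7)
My plan is to count the unused cells by identifying where they can live and then applying \Cref{lem:small-space}. The first step is to observe that every unused array cell is a leaf contained in some maximal empty subtree of one of the elementary trees. The parent of any such maximal empty subtree must be a \emph{partial} node: by maximality the parent is non-empty, and since one of its children (the subtree's root) is empty, exactly one of its children is empty, so the parent is partial by definition. A partial node at height $h \ge 1$ therefore contributes exactly $2^{h-1}$ unused cells (the leaves of its unique empty child, which is a complete subtree of height $h-1$). I also need to rule out the degenerate case of an entire elementary tree being empty, which is immediate since the algorithm only allocates a new tree when needed, so every allocated tree contains at least one element.

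The second step is to bound the number of partial nodes at each height. Since the root of every elementary tree is labeled $[0,1)$ and every marked child node is labeled by a half of its parent's label, every node at height $h$ carries an interval label of length $2^{h-H}$. By \Cref{lem:small-space}, partial nodes at the same height $h$ are labeled by pairwise disjoint subintervals of $[0,1)$, so there are at most $2^{H-h}$ of them across the entire data structure. Combining the two observations, the total number of unused cells is at most
\[
\sum_{h=1}^{H} 2^{H-h}\cdot 2^{h-1} \;=\; H\cdot 2^{H-1}.
\]

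The final step is to plug in $H = \lfloor \log(\eps n) - \log\log n\rfloor$, which gives $2^H \le \eps n/\log n$ and $H \le \log n$, hence $H\cdot 2^{H-1} \le \eps n/2 \le \eps n$. Since this bound is purely structural it holds at every moment during the execution, establishing the first part of the lemma. The second part is then immediate: after inserting all $n$ elements, the total space of the data structure equals the number of used cells plus the number of unused cells, which is at most $n + \eps n = (1+\eps)n$. I do not foresee a real obstacle here: the heavy lifting is done by \Cref{lem:small-space}, and the remaining argument is a clean counting combined with the choice of $H$.
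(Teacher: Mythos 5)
Your proof is correct and follows essentially the same charging argument as the paper: both identify maximal empty subtrees, charge them to their partial parents, apply \Cref{lem:small-space} to bound the number of partial nodes per height by $2^{H-h}$, and sum over heights. Your accounting ($H\cdot 2^{H-1}$) is marginally tighter than the paper's $H\cdot 2^H$, and your explicit handling of the ``no entirely empty tree'' edge case is a nice touch the paper leaves implicit, but there is no substantive difference in method.
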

\begin{proof}
    We set up a charging scheme that bounds the unused space in the data structure. Every maximal empty subtree of height $h < H$ (comprising $2^h$ empty array cells) is a child subtree of a partial node at height $h+1$. We charge the empty space of $2^h$ to this partial node. By \Cref{lem:small-space}, the partial nodes at height $h+1$ must be labeled by distinct intervals. The label of a node at height $h$ is an interval of length $2^{h-H}$. Putting these together, we can conclude that the total charge on the nodes at height $h+1$ is at most $2^H$. Summing over heights $h = 0, 1, \ldots, H-1$ and since $\eps$ is at most a constant, the total empty space is at most 
\[
    H\cdot 2^H
    \le \frac{\eps n}{\log{n}} \cdot  \lfloor\log({\eps n) - \log{\log{n }}}\rfloor
   \le \eps n. \qedhere
\]    
\end{proof}

\eat{
The space bound follows immediately from the above lemma:
\begin{lemma}\label{lem:smallspace-space}
    The total space used in the data structure above is at most $(1+\eps)n$.
\end{lemma}
\begin{proof}
By \cref{lem:smallspace-unused-space} the total amount of unused space is $\epsilon n$,
since the used space is at most $n$, the total space is at most $(1+\eps)n$.
\end{proof}
}

\eat{
To this end, we define an empty node to be {\em maximal} if its parent is partial (and not empty itself). Let $\cE(h)$ denote the set of maximal empty nodes at level $h$ across all the virtual trees. The subtree under an empty node at height $h$ contains $2^{h}$ empty cells. Therefore, the total number of empty cells across all the arrays is given by $\sum_{h=0}^H |\cE(h)|\cdot 2^{h}$. 

Note that the label of any node at height $h$ is an interval of length $2^{h-H}$. Therefore, by \Cref{lem:space-main}, we have $|\cE(h)| \le 2^{H-h}$. Therefore, the total number of empty cells across all the arrays is at most 
\begin{align*}
    &\sum_{h=0}^H |\cE(h)|\cdot 2^{h} 
   \le \sum_{h=0}^H 2^{H-h} \cdot 2^{h} 
   = \sum_{h=0}^H 2^{H}
   = 2^{H}\cdot (H+1) \\
   & \le \lfloor\frac{\eps n}{\log{n}} \rfloor \cdot  \lfloor\log({\eps n) - \log{\log{n }}}+1\rfloor
   \le \lfloor\frac{\eps n}{\log{n}} \rfloor \cdot  \lfloor\log{n} \rfloor \le 
   \lfloor \eps n \rfloor.
\end{align*}
The second to last inequality follows by assuming that $n\ge 4$.
}

\subsection{Cost Bound}
Note that by \cref{lem:elementary-cost}, the cost incurred by a single elementary tree is at most 
\begin{equation}\label{eq:single}
    H = \lfloor\log (\eps n) - \log \log n\rfloor \le \log (\eps n),
\end{equation}    
since the roots are labeled $[0, 1)$.
We use this property to bound the total cost.
\begin{lemma}\label{lem:cost-smallspace}
    The total cost of the data structure above is $O\left(\frac{\log^2 n}{\eps}\right)$.
\end{lemma}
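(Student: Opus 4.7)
My plan is to decompose the total cost into two parts: the cost accumulated \emph{within} each elementary tree $A_k$, and the cost incurred at \emph{transitions} between consecutive trees (i.e., the difference between the rightmost occupied cell of $A_k$ and the leftmost occupied cell of $A_{k+1}$). The cost within a single tree is already controlled by \Cref{lem:elementary-cost} and \eqref{eq:single}, so the main work is to bound the number of trees used and then combine the two contributions.

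First, I will bound the number of elementary trees $N$ in the data structure. Each tree has $2^H$ cells, and by \Cref{lem:smallspace-unused-space} the total space used is at most $(1+\eps)n$, so
\[
    N \;\le\; \left\lceil \frac{(1+\eps)n}{2^H} \right\rceil.
\]
From $H = \lfloor \log(\eps n) - \log\log n\rfloor$ we get $2^H \ge \tfrac{\eps n}{2\log n}$, hence $N = O(\log n / \eps)$ (using $\eps \le O(1)$, so $1+\eps = O(1)$).

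Next, the within-tree cost: by \eqref{eq:single}, each elementary tree contributes cost at most $\log(\eps n) \le \log n$, so the aggregate within-tree cost is at most $N \cdot \log n = O(\log^2 n / \eps)$. For the transition cost, the difference between consecutive elements across a tree boundary is at most $1$ since all elements lie in $[0,1)$, and there are at most $N-1$ such transitions, contributing at most $N = O(\log n / \eps)$ in total. Summing the two contributions gives the claimed bound of $O(\log^2 n / \eps)$.

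The only mildly delicate step is verifying the lower bound $2^H \ge \eps n/(2\log n)$ in the regime $\eps = \Omega(\log n/n)$ (so that $H \ge 0$, as noted in \Cref{rem:epsilon-cannot-be-too-small}); otherwise the calculation is a direct combination of the tree count and \Cref{lem:elementary-cost}, and I do not anticipate any conceptual obstacle.
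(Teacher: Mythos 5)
Your proposal is correct and matches the paper's proof essentially step for step: both bound the number of elementary trees by $(1+\eps)n / 2^H = O(\log n/\eps)$ using \Cref{lem:smallspace-unused-space} and the lower bound $2^H \ge \eps n/(2\log n)$, then combine the per-tree cost bound $\le \log(\eps n)$ from \Cref{lem:elementary-cost} with the $O(\log n/\eps)$ inter-tree transition cost. No meaningful difference in approach.
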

\begin{proof}
    By \Cref{lem:smallspace-unused-space}, the total space is at most $(1+\eps) n$, while that of a single elementary tree is
    \[
        2^H = 2^{\lfloor\log (\eps n) - \log \log n \rfloor} \geq 2^{\log (\eps n) - \log \log n - 1} = \frac{\eps n}{2 \log n}.
    \]    
    Therefore, the number of elementary trees is at most 
    \[
        \frac{(1+\eps)n}{\frac{\eps n}{2\log n}} = 2\left(1+1 / \eps\right) \, \log n.
    \]
     By \Cref{eq:single}, the cumulative cost is at most
     \[
        2\left(1+ 1 / \eps\right) \, \log n \cdot \log (\eps 
 n) = O\left(\frac{\log^2 n}{\eps}\right).
 \]
 
     We are only left with the cost incurred {\em between} elementary trees, i.e., by the rightmost element in $A_k$ and the leftmost one in $A_{k+1}$. This cost is at most 1 for every pair of consecutive elementary trees, and hence, $O\left(\frac{\log n}{\eps}\right)$ overall.     
     The lemma follows by adding the two types of cost.
\end{proof}

\section{Core Primitive: Composite Trees}\label{sec:composite}    We now define our second core primitive -- a data structure that we call a {\em composite tree} -- by using elementary trees as building blocks. 
    
    A composite tree of height $h$ is an array data structure that has an associated complete binary tree, where each node of the binary tree is an elementary tree. We again emphasize that just like elementary trees, a composite tree is {\em virtual} in that it is used to define how elements are inserted in the array, but the actual data structure is just an array and not a tree. The leaves of a composite trees are at height $1$ (not $0$) and the root is at height $h$. A node at height $h'$ in a composite tree is an elementary tree of height $h'$, i.e., it is an array of size $2^{h'}$. We show a composite tree in Fig.~\ref{fig:composite-tree}. Note that we store the composite tree ``in order'' in the underlying array, i.e., the array corresponding to the left child subtree of the root (which is a composite tree of height $h-1$, recursively), followed by the array corresponding to the root node (which is an elementary tree of height $h$), followed by the array corresponding to the right child subtree of the root (which is a composite tree of height $h-1$, recursively).


\begin{figure}[tbh]
\caption{Illustration of a composite tree}
\label{fig:composite-tree}
\centering
\includegraphics[width=0.8\linewidth]{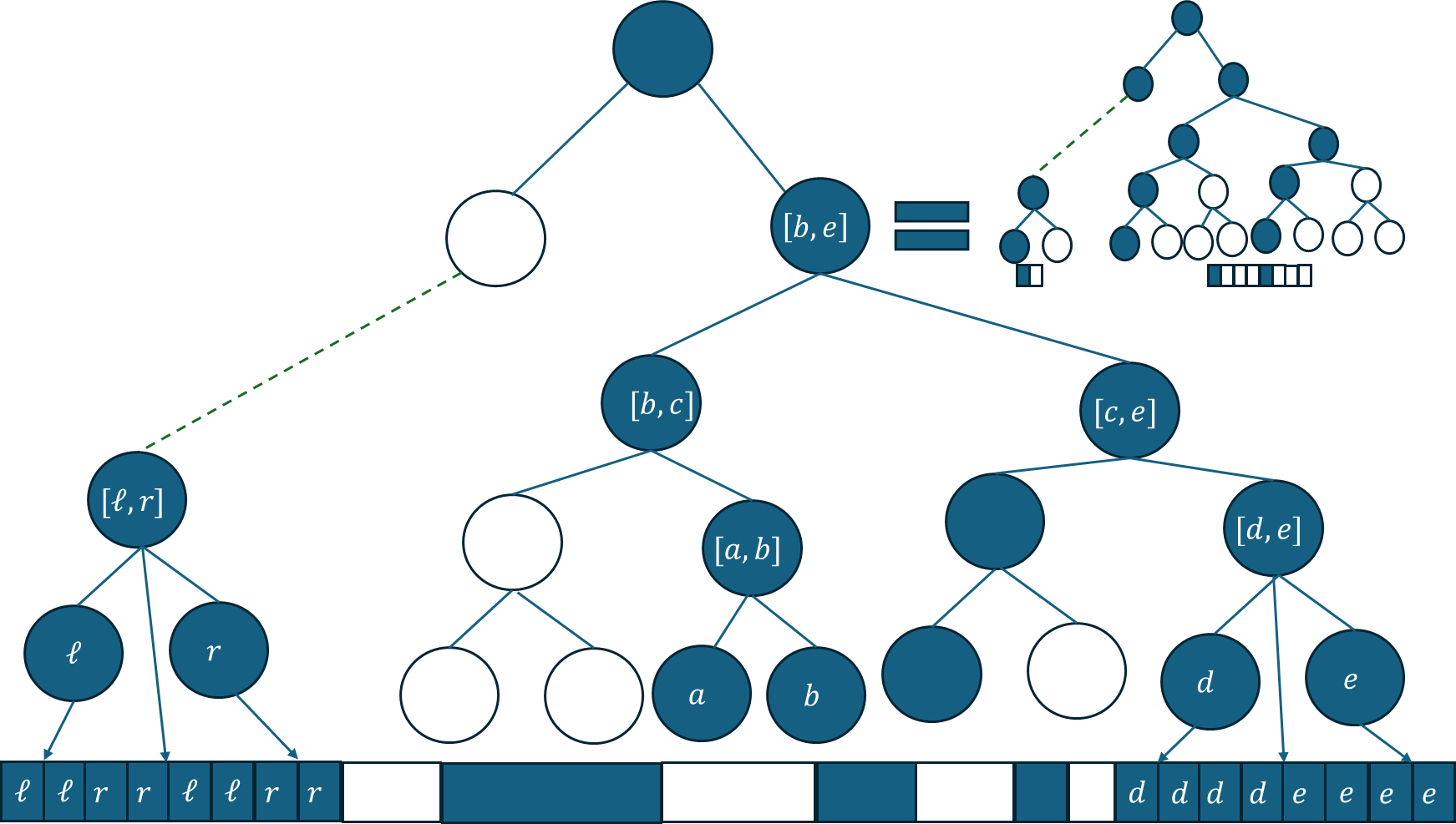}
\end{figure}

    \paragraph{Size of the Array.}
    We compute the size of the array in a composite tree:
    \begin{lemma}\label{lem:space-composite}
        The size of the array in a composite tree of height $h$ is $h\cdot 2^h$.
    \end{lemma}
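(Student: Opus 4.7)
The plan is to prove this by straightforward induction on the height $h$, using the recursive ``in-order'' layout of the composite tree described just above the lemma. Let $S(h)$ denote the size of the array underlying a composite tree of height $h$.

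For the base case $h=1$, the composite tree consists of a single node (the leaves of a composite tree are at height $1$, so the root itself is a leaf in this case), which is an elementary tree of height $1$. By the definition of an elementary tree, its array has size $2^1 = 2 = 1 \cdot 2^1$, so $S(1) = 1 \cdot 2^1$ as claimed.

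For the inductive step, assume $S(h-1) = (h-1)\cdot 2^{h-1}$. By the in-order storage rule, the array of a composite tree of height $h$ is the concatenation of three pieces: the array of the left child subtree (a composite tree of height $h-1$), the array of the root elementary tree of height $h$ (which has size $2^h$), and the array of the right child subtree (also a composite tree of height $h-1$). Therefore
\[
    S(h) \;=\; 2\,S(h-1) + 2^h \;=\; 2 \cdot (h-1)\cdot 2^{h-1} + 2^h \;=\; (h-1)\cdot 2^h + 2^h \;=\; h \cdot 2^h,
\]
completing the induction.

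There is no real obstacle: the only thing to be careful about is the base case convention, namely that the leaves of a composite tree sit at height $1$ rather than $0$, so that a height-$1$ composite tree is a single elementary tree of height $1$ (size $2$), matching $h\cdot 2^h$ at $h=1$. Once that convention is pinned down, the recurrence $S(h) = 2\,S(h-1) + 2^h$ solves immediately to $h\cdot 2^h$.
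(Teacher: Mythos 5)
Your proof is correct and takes essentially the same approach as the paper: both derive the recurrence $S(h) = 2\,S(h-1) + 2^h$ with base case $S(1)=2$ from the in-order layout; you simply spell out the inductive verification that $h\cdot 2^h$ solves it, which the paper leaves implicit.
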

    \begin{proof}
    Recursively, a composite tree of height $h$ comprises two composite trees of height $h-1$ each and an elementary tree of height $h$. The latter is an array of size $2^h$. If $s(h)$ denotes the size of a composite tree, we get $s(h) = 2\cdot s(h-1) + 2^h$. Furthermore, for $h=1$, a composite tree of height $1$ is an elementary tree of height $1$, which is an array of size $2$. Putting these together, we get that the size of a composite tree of height $h$ is $h\cdot 2^h$. 
    \end{proof}

    \paragraph{Node Labels.}
    Recall that the nodes of an elementary tree are labeled by sub-intervals of $[0, 1)$, where the root gets a static label before any element is inserted and the non-root nodes get labels decided dynamically by the online algorithm. The nodes of a composite tree are also labeled by intervals. Each node of a composite tree is actually an elementary tree; the label of a node in the composite tree is the same as the label of the root of the elementary tree represented by it. This means that the nodes of a composite tree are also labeled by sub-intervals of $[0, 1)$, but these labels are static and decided before elements are inserted. The labels of composite tree nodes satisfy the following property: the labels for $v_\ell$ and $v_r$, the left and right child of a node $v$, are the left and right half-intervals of the label for $v$. So, if the label of $v$ is $[a, a+\Delta)$, then that of $v_\ell$ and $v_r$ are $[a, a + \Delta/2)$ and $[a + \Delta/2, a + \Delta)$ respectively.

    \paragraph{Insertion Algorithm.}
    We now describe the online algorithm that tries to insert an element $x$ in a composite tree. Borrowing notation from elementary trees, we say that $x$ is admissible at a composite tree node if the interval label of the node contains $x$. (Note that unlike elementary trees, admissibility in a composite tree node can be decided before any insertions have happened, since the labels are static.)  

    To insert $x$, we first identify the unique leaf node $v$ in the composite tree where $x$ is admissible. The uniqueness follows from the fact that the left and right child nodes in a composite tree split the parent's label into disjoint half-intervals. Therefore, the interval labels of all the leaves are disjoint. Denote the path from the leaf node $v$ to the root node $r$ of the composite tree as $v_1 = v, v_2, \ldots, v_h = r$. 

    We now give the algorithm for inserting $x$ in the composite tree. Recall that $\ins(v_i, x)$ denotes the online algorithm that tries to insert $x$ into the elementary tree at node $v_i$. The online algorithm for a composite tree uses $\ins(v, x)$ as a subroutine:
    \begin{itemize}    
    \item[-] Initialize $i=1$.
    \item[-] Run $\ins(v_i, x)$. If $\ins(v_i, x)$ returns success, i.e., $x$ was inserted in the elementary tree at node $v_i$, then return success and exit the algorithm. 
    \item[-] Otherwise, $\ins(v_i, x)$ returns failure, i.e., it failed to insert $x$ in the elementary tree at node $v_i$. If $v_i$ is the root node, then return failure and exit the algorithm; else, increase $i$ by $1$ and go to the previous step.
    \end{itemize}

\subsection{Tree Ensembles}


We now consider an aggregate data structure comprising a single composite tree $A$ followed by $\alpha \ge 0$ elementary trees $B_1, B_2, \ldots, B_\alpha$. The composite tree and all the elementary trees are of the same height $h$. Furthermore, the interval label $I$ of the root of the composite tree $A$ is identical to that of the roots of the elementary trees $B_i$, $i = 1, 2, \ldots, \alpha$. 
We assume that every element that is being inserted in this data structure is in this interval $I$. On the arrival of an element $x$, the online sorting algorithm (call it $\cal O$) first tries to insert $x$ in the composite tree $A$ using the algorithm given earlier in this section. If this succeeds, $\cal O$ terminates with success. If this fails, then $\cal O$ tries to insert $x$ in $B_1, B_2, \ldots, B_\alpha$ in this order, using the algorithm given in \Cref{sec:elementary}. The first time that it succeeds to insert $x$ in some $B_i$, algorithm $\cal O$ terminates with success. If $\cal O$ fails to insert $x$ in any of $B_1, B_2 \ldots, B_\alpha$, then it terminates with failure. We call this data structure comprising $A, B_1, B_2, \ldots, B_\alpha$ a {\em tree ensemble} of order $\alpha$.

The main property of tree ensembles that enables its space efficiency is that the number of the elementary tree nodes labeled by an interval corresponds to the number of the elements in that interval inserted in the ensemble. We call this the {\em Space Lemma}.

\begin{lemma}[Space Lemma]\label{lem:main-space}
    Consider a sequence of elements being inserted in a tree ensemble $A, B_1, B_2, \ldots, B_\alpha$.  Suppose $r+1$ elementary tree nodes at height $h\ge 1$ are labeled by the same interval $I$. (This can include both the elementary trees corresponding to nodes of $A$ at height $\ge h$ and the elementary trees $B_1, B_2, \ldots, B_\alpha$.)
    Then, at least $r\cdot 2^h + 1$ elements in $I$ have been inserted into the tree ensemble.
\end{lemma}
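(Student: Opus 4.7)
My plan is to prove the Space Lemma by strong induction on the height $h$, aided by a tree-ensemble generalization of the structural invariant in \Cref{lem:small-space}: in a tree ensemble, at any height $h$ and for any interval $I$, at most one labeled node at that (height, label) pair can be ``partial'' (i.e., have exactly one child empty) at the end of the algorithm.

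For the base case $h = 1$, each labeled height-$1$ node has two leaves as children. By the generalized invariant, at most one of the $r+1$ labeled nodes with label $I$ at height $1$ is partial, so at least $r$ of them have both leaves marked, contributing $2r$ elements in $I$. The remaining (possibly partial) node contributes at least its triggering element, yielding $\ge 2r + 1 = r \cdot 2^h + 1$ elements in $I$.

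For the inductive step $h \ge 2$, I apply the inductive hypothesis at height $h - 1$ to each of the halves $L, R$ of $I$. Each labeled height-$h$ node $u_i$ has its left child at height $h - 1$ labeled (with $L$ or $R$) at the moment $u_i$ is labeled, and by the generalized invariant at height $h$, all but at most one of the $r + 1$ labeled nodes have both children marked. This gives at least $2r + 1$ labeled height-$(h-1)$ nodes with labels in $\{L, R\}$, summed over the two halves. A direct application of the IH to $L$ and $R$ gives $N_I = N_L + N_R \ge (m_L + m_R - 2) \cdot 2^{h-1} + 2$; I close the remaining gap by carefully tracking the single ``partial'' labeled node on only one of the two sides, which allows a tighter application of the IH on the other side, and by exploiting additional labeled nodes appearing elsewhere in the ensemble (e.g., in the composite tree $A$ at composite-tree height $h - 1$) when the partial $u_i$'s marked child lies on a specific side.

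The main technical obstacle is establishing the ``at most one partial per (height, label)'' invariant for the tree ensemble. Compared to the pure elementary-tree setting of \Cref{lem:small-space}, one must handle the interaction between the composite tree $A$'s hierarchical structure and the sequential traversal of elementary trees $B_1, \ldots, B_\alpha$. Specifically, if two partial labeled nodes with the same label and height coexisted, one needs to show that the triggering element of the second would have been absorbed by an empty child of the first via the algorithm's \texttt{insert} procedure, regardless of where these partial nodes lie in the ensemble. This requires suitably adapting \Cref{cl:elem-greedy-two} to the ensemble, tracking how the DFS within an individual elementary tree interacts with the inter-elementary-tree traversal order (\emph{composite-leaf-upward within $A$, then $B_1, B_2, \ldots$}), which forms the crux of the argument.
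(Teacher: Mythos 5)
Your proposal follows the same overall strategy as the paper (induction on the elementary height $h$, plus a structural invariant that at most one of the same-height, same-label elementary nodes can be ``partial''), and your base case is identical. The main shortfall is in the inductive step, where your accounting leaves a genuine gap.

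With only the $r+1$ height-$h$ nodes and the ``at most one partial'' invariant, you obtain $m_L + m_R \ge 2r+1$ labeled height-$(h-1)$ nodes, and the induction yields $(m_L + m_R - 2)\cdot 2^{h-1} + 2 \ge (2r-1)\cdot 2^{h-1}+2$, which falls short of $r\cdot 2^h + 1 = 2r\cdot 2^{h-1}+1$ by $2^{h-1}-1$. You gesture at ``additional labeled nodes appearing elsewhere in the ensemble'' but frame their use conditionally (``when the partial $u_i$'s marked child lies on a specific side''), and the other half of your gap-closing plan (``a tighter application of the IH on the other side'') is too vague to assess. The paper's fix is simpler and \emph{unconditional}: the composite tree $A$ contains, at composite height $h-1$, two statically labeled nodes whose labels are exactly the left and right halves $L$ and $R$ of $I$ (and these are elementary-tree roots, hence height-$(h-1)$ nodes, disjoint from the children of any $u_i$). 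Adding these two always gives $\lambda + \rho \ge 2r+1$, whence the IH yields $(2r+1)\cdot 2^{h-1}+2 > r\cdot 2^h + 1$. You should drop the conditional framing and simply always invoke these two static nodes. Separately, your instinct that the ``at most one partial'' invariant is the real technical load-bearing step is sound --- the paper's proof asserts it almost in passing (``else the $(r+1)$st node would not have been labeled''), and making it rigorous does indeed require adapting \Cref{cl:elem-greedy-two} together with \Cref{cl:label} to handle the cross-elementary-tree traversal order of the ensemble, roughly as you describe.
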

\begin{proof}
    We prove this by induction on $h$. First, we consider the base case, i.e., elementary tree nodes at $h = 1$. If $r+1$ nodes are labeled by the same interval $I$, then for the first $r$ of these nodes, each of their two children leaf nodes must already be occupied by some element in $I$. Moreover, for the rightmost node labeled by $I$, at least one of its children leaf nodes must be occupied. Therefore, there are at least $2r+1$ elements in $I$ that have already been inserted in the corresponding arrays, and hence, the lemma holds for $h=1$.

    We now prove the inductive case. 
    Suppose $r+1$ elementary tree nodes at height $h\ge 2$ are labeled by an interval $I$. For the first $r$ of these nodes, both children must also be labeled, else the $(r+1)$st node would not have been labeled by $I$. Furthermore, for the rightmost node labeled $I$, at least one its children must be labeled. Finally, note that there are two nodes in $A$ at height $h-1$ labeled by the left and right half-intervals of $I$, and these are elementary trees of height $h-1$. Therefore, the total number of nodes at height $h-1$ that are labeled by either the left or right half-interval of $I$ is at least $2r+3$. 
    
    Note that there is at least one node labeled by each of the left and right half-intervals. Let $\lambda + 1$ nodes be labeled with the left half-interval of $I$ and $\rho + 1$ nodes be labeled with the right half-interval of $I$. Thus, $\lambda + \rho \ge 2r+1$. Applying the inductive hypothesis on these labeled nodes at height $h-1$, we get that the number of elements in $I$ that have been inserted in the tree ensemble is at least 
    \[
        \left(\lambda\cdot  2^{h-1} + 1\right) + \left(\rho\cdot 2^{h-1} + 1\right)
        \ge (2r+1)\ 2^{h-1} + 2
        > r \cdot 2^h + 1. \qedhere
    \]
\eat{
    Next, suppose $\lambda = 0$. (The case of $\rho = 0$ is symmetric, so we skip it for brevity.) In this case, $\rho \ge 2r+2$. Applying the inductive hypothesis on these labeled nodes at height $h-1$, we get that the number of elements in (the right sub-interval of) $I$ that have been inserted in the array is at least 
    \[
        \left((\rho-1) 2^{h-1} + 1\right)
        = (2r+1)\cdot 2^{h-1} + 1
        > r \cdot 2^h + 1. \qedhere
    \]    
}    
\end{proof}

{\bf Remark:} Let $I^*$ be the interval that labels the roots of $A, B_1, B_2, \ldots, B_\alpha$. The lemma applies as stated to any sub-interval of $I^*$. It also applies to the interval $I^*$ itself, but since the roots are statically labeled by $I^*$ before any element is inserted, we define $r+1$ in this case as the number of \underline{non-empty} elementary trees, including the elementary tree at the root of $A$ and $B_1, B_2, \ldots, B_\alpha$. 

The following is a useful corollary of the space lemma when applied only to a composite tree.
\begin{corollary}\label{cor:composite-space}
    Consider a sequence of elements being inserted in a composite tree $A$. If the elementary tree representing a node $v$ in $A$ at height $h$ and with interval label $I$ is non-empty, then at least $2^{h-1}+1$ elements in $I$ have been inserted into $A$.
\end{corollary}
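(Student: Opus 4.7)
The plan is to reduce the corollary to the Space Lemma (\Cref{lem:main-space}) applied to the tree ensemble consisting of $A$ alone (i.e., with $\alpha = 0$), at elementary tree height $h-1$ for one of the two half-intervals of $I$. Concretely, I will exhibit two distinct elementary tree nodes at height $h-1$ labeled by the same half-interval of $I$; then plugging $r+1 = 2$ into the Space Lemma gives at least $1\cdot 2^{h-1}+1 = 2^{h-1}+1$ elements in that half-interval, and therefore in $I$.

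The starting point is that the non-emptiness of the elementary tree at $v$ forces a companion non-empty elementary tree at a child of $v$ in $A$. Since the elementary tree at $v$ is non-empty, some element $x \in I$ must have been inserted into it. By the composite-tree insertion procedure, $\ins(v, x)$ was executed only because $\ins(c, x)$ returned failure, where $c$ is the unique child of $v$ in $A$ on the path from the leaf of $A$ whose static label contains $x$ up to $v$. I would then invoke \Cref{cl:elem-greedy-one} contrapositively: if the elementary tree at $c$ had no marked non-root nodes, then starting at its statically labeled (and admissible, since $x$ lies in $c$'s label) root, the algorithm would greedily descend marking nodes and terminate in success at a leaf. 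Hence $\ins(c, x)$ failing forces the elementary tree at $c$ to be non-empty.

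Without loss of generality, assume $x$ lies in the left half $I_L$ of $I$, so that the static label of $c$ equals $I_L$ (the other case is symmetric). I would then pinpoint two elementary tree nodes at height $h-1$ labeled $I_L$: (a) the root of the elementary tree at $c$, which is statically labeled $I_L$ and, by the previous step, hosts a non-empty elementary tree -- so it is counted by the Space Lemma under the convention of the Remark following that lemma; and (b) the child of the root of the elementary tree at $v$ that lies on the root-to-leaf path inside this elementary tree along which $x$ was inserted. Item (b) is dynamically marked $I_L$ because at each step of the in-order traversal within the elementary tree at $v$, the chosen child's label must contain $x$, and $x \in I_L$. These two nodes are clearly distinct (one sits inside an elementary tree at composite height $h-1$, the other inside an elementary tree at composite height $h$), so the Space Lemma at elementary tree height $h-1$ with $r+1=2$ yields at least $2^{h-1}+1$ elements in $I_L$, and hence in $I$, that have been inserted into $A$.

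The main obstacle I anticipate is being precise about what counts as a ``labeled node'' at the relevant elementary tree height under the Space Lemma: the statically labeled root of the elementary tree at $c$ contributes only because the failure argument certifies non-emptiness (matching the Remark's convention for statically labeled roots), whereas the interior node inside the elementary tree at $v$ contributes because it was dynamically marked during the insertion of $x$. Ensuring that these two nodes are genuinely distinct and that the counting conforms to the Space Lemma's hypothesis is the crux of the argument.
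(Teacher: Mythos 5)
Your proof is correct and follows essentially the same route as the paper's: you identify two elementary-tree nodes at height $h-1$ sharing the same half-interval of $I$ — the dynamically marked child of the root of the elementary tree at $v$ and the statically labeled root of the elementary tree at the height-$(h-1)$ composite-tree node $c$ — and then invoke the Space Lemma with $r+1 = 2$. The only difference is that you explicitly verify, via \Cref{cl:elem-greedy-one} and the bottom-up order of the composite-tree insertion, that the elementary tree at $c$ is non-empty before counting its root; the paper leaves this non-emptiness implicit (it is guaranteed anyway, since the existence of a dynamically labeled node above forces earlier insertion attempts at $c$ to have failed), so your version is simply a more careful write-up of the same argument rather than a different approach.
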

\begin{proof}
    Let $u$ be the root of the elementary tree for $v$, and let $u_\ell, u_r$ be its left and right child. At least one of the subtrees under $u_\ell, u_r$ is non-empty; thus, by \Cref{cl:label}, at least one of $u_\ell, u_r$ is labeled by a half-interval of $I$; call this half-interval $I'$. Note that there is a node in $A$ at height $h-1$ that also has label $I'$, and so does the root of the elementary tree at this node. Therefore, there are at least $2$ elementary tree nodes labeled by $I'$. It follows by \Cref{lem:main-space} that at least $2^{h-1}$ elements in $I'$ (that are also in $I$) have been inserted in $A$. 
\end{proof}

Next, we consider the cost of a solution produced by a tree ensemble. Let $x_i, x_{i+1}$ be two elements that occupy consecutive non-empty array cells in the tree ensemble. We define three types of costs:
\begin{itemize}
    \item[-] \one cost:  If $x_i, x_{i+1}$ are in the same elementary tree, either corresponding to a node in $A$ or in one of $B_1, B_2, \ldots, B_\alpha$, then we say that the cost between $x_i, x_{i+1}$ is of \one.    
    \item[-] \two cost: If $x_i, x_{i+1}$ are in elementary trees corresponding to different nodes $u, v$ in $A$, then we say that the cost between $x_i, x_{i+1}$ is of \two.
    \item[-] \three cost: If $x_i$ is in $A$ and $x_2$ is in $B_1$, or if $x_i$ is in $B_i$ and $x_{i+1}$ is in $B_{i+1}$ for some $i \ge 1$, then we say that the cost between $x_i, x_{i+1}$ is of \three.
\end{itemize}

The next lemma bounds each type of cost in a tree ensemble.

\begin{lemma}[Cost Lemma]\label{lem:main-cost}
     Suppose a set of $n$ elements is inserted in a tree ensemble denoted $A, B_0, B_1, \ldots, B_{\alpha-1}$. Let $\ell, H$ respectively denote the length of the interval label of the roots and the height of the trees. Then, the cost of the elements in the ensemble is given by:
     \begin{itemize}
        \item[-] the total \one cost is at most $n H \cdot (\ell/2^H)$.
        \item[-] the total \two cost is at most $9nH \cdot (\ell/2^H) + \ell$.
        \item[-] The total \three cost is at most $\alpha \ell$.
     \end{itemize}
\end{lemma}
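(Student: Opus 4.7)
The plan is to bound each of the three cost types separately, leaning on the Space Lemma (\Cref{lem:main-space}), its corollary (\Cref{cor:composite-space}), and the per-tree cost bound (\Cref{lem:elementary-cost}).

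For \three, the bound is immediate: in the sequence $A, B_0, \dots, B_{\alpha-1}$ there are exactly $\alpha$ tree-to-tree boundaries, and the two neighbouring elements at each boundary both lie in the common root interval $I^*$ of length $\ell$, so each boundary contributes at most $\ell$, for a total of $\alpha\ell$.

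For \one, I charge every adjacent pair sharing an elementary tree to the pair's LCA inside that tree. The key structural observation is that every elementary-tree node at elementary-tree height $h''$---whether it lives inside the elementary tree at a composite-tree node of height $h'\ge h''$ in $A$, or inside one of the $B_i$'s---carries an interval label of length exactly $\ell/2^{H-h''}$, since the root label $\ell/2^{H-h'}$ of any such elementary tree is halved at each internal level. By \Cref{cl:label}, each such pair costs at most the length of its LCA's label, and each non-leaf node serves as LCA of at most one pair. I then count labeled non-leaf nodes at elementary height $h''$ using \Cref{lem:main-space}: for each sub-interval $I$ of $I^*$ of length $\ell/2^{H-h''}$, if $m_I$ ensemble nodes carry label $I$, then at least $(m_I-1)\cdot 2^{h''}+1$ elements of $I$ have been inserted; summing over the disjoint $I$'s bounds the total labeled nodes at elementary height $h''$ by $n/2^{h''}+2^{H-h''}$. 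Multiplying by the per-pair cost $\ell/2^{H-h''}$ and summing over $h''=1,\dots,H$ yields the \one bound $nH\ell/2^H$, after absorbing a lower-order additive term.

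For \two, I charge each pair to its LCA $w$ inside the composite tree $A$; both endpoints lie in $w$'s static label of length $\ell/2^{H-h'_w}$, which caps the per-pair cost. Because $A$ is laid out in in-order (left subtree, own elementary tree, right subtree), at most two \two pairs can have LCA $w$: at most one crossing left-subtree $\to$ own-ET, at most one crossing own-ET $\to$ right-subtree, or a single direct crossing left $\to$ right when $w$'s own ET is empty. I split the active $w$'s into case (a), where $w$'s own ET is non-empty, and case (b), where it is empty but both child subtrees are non-empty. In case (a), \Cref{cor:composite-space} gives at least $2^{h'_w-1}+1$ elements in $w$'s own ET, so $2^{h'_w}$ is at most twice the element count of $w$'s own ET; summing $2\ell\cdot 2^{h'_w}/2^H$ over case (a) nodes telescopes against the global element count $n$, giving $\mathcal{O}(n\ell/2^H)$. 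Case (b) relies on the structural fact that a fixed element $x$ can be the rightmost element in the left subtree of some case (b) node $w$ for at most one such $w$: any deeper case (b) ancestor $w'$ of $x$ (with $u_x$ in $w'$'s left subtree and $h'_{w'}<h'_w$) would necessarily have a non-empty right subtree (by the case (b) definition), whose elements would then lie to the right of $x$ in in-order and still inside $w$'s left subtree, contradicting rightmost-ness at $w$. This per-element uniqueness allows the case (b) cost to be charged at most once per element and, combined with the per-node element lower bound of \Cref{cor:composite-space}, yields the claimed $\mathcal{O}(nH\ell/2^H)$; the additive $\ell$ absorbs the single unattributable outer transition.

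The main obstacle is the case (b) analysis for \two. Without the ``unique case (b) ancestor'' structural claim, a single element could be charged at every composite height, inflating the residual additive term from $\mathcal{O}(\ell)$ to $\mathcal{O}(H\ell)$ and spoiling the sharp dependence on $n$ for small $n$.
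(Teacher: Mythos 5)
Your \three bound is correct and matches the paper's. Your \one and \two arguments both have genuine gaps.

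\textbf{\one:} You charge each pair to its LCA in the elementary tree and bound the number of labeled non-leaf nodes at elementary height $h''$ via the Space Lemma as $n/2^{h''}+2^{H-h''}$. Multiplying by the per-pair cost $\ell/2^{H-h''}$ and summing over $h''$ gives $nH\ell/2^H + H\ell$. You dismiss the $H\ell$ term as lower-order, but it is not when the lemma is applied in \Cref{sec:small-error-part2}: there each composite tree holds only $n_i$ of the $n$ total elements while $2^H = n$, so the main term is $n_i \log n/(n\beta)$ but the additive term is $\log n/\beta$ per composite tree, and summing over the $\beta$ composite trees gives $\Theta(\log n)$ instead of the needed $O(\log n/\beta)$. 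The paper avoids this by charging the \emph{children} of the LCA rather than the LCA itself, and then observing that the root of the elementary tree at the composite node with label $I$ is \emph{never} a child of anything, hence never charged. This uncharged root exactly cancels the ``$r+1$'' versus ``$r\cdot 2^h$'' slack in the Space Lemma, eliminating the additive term entirely.

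\textbf{\two:} Your case~(a) is fine and essentially mirrors the paper's subcases where the LCA's own elementary tree is non-empty. Your case~(b) (LCA's own ET empty, both children non-empty) has a real gap. Your per-element uniqueness claim is correct, but it only bounds the \emph{number} of case~(b) pairs by $n$ — it says nothing about the \emph{total cost}, and a single case~(b) pair at composite height $h'_w$ can cost up to $\ell\cdot 2^{h'_w}/2^H$, which is $\Theta(\ell)$ near the root. You invoke \Cref{cor:composite-space}, but it applies to nodes whose own elementary tree is \emph{non-empty}, precisely the opposite of the case~(b) hypothesis, so it does not give you the element lower bound needed for the charging to close. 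The paper's resolution is entirely different: it observes that case~(b) pairs have $x_i < x_{i+1}$ (the children's labels are disjoint and ordered), so they contribute zero to the backward \emph{error} $E := \sum_i \max(x_i - x_{i+1}, 0)$, and then converts error to cost via the telescoping identity $C_1 + C_2 = \sum_i |x_{i+1}-x_i| \le 2(C_1 + E) + \ell$. The forward jumps in case~(b) never need to be bounded directly — that is the key idea your proof is missing.
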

\begin{proof}
    First, we consider \one cost. Let $v$ be the least common ancestor of $x_i, x_{i+1}$ in the elementary tree containing them, and let $v_\ell, v_r$ be the left and right child node of $v$. Then, $x_i$ is the rightmost non-empty leaf in the subtree under $v_\ell$ and $x_{i+1}$ is the leftmost non-empty leaf in the subtree under $v_r$. By \Cref{cl:label}, both $x_i, x_{i+1}$ are in the interval label of $v$. Next, consider the interval labels of $v_\ell, v_r$; call these $I_\ell, I_r$. If $I_\ell = I_r = I$ (say), then by \Cref{cl:label}, we have that $x_i, x_{i+1}$ are both in $I$ and $|x_i - x_{i+1}|$ is at most the length of $I$. In this case, we give a charge to $v_\ell$ equal to the length of $I$; $v_r$ receives no charge. The other case is that $v_\ell, v_r$ have different interval labels, i.e., $I_\ell not= I_r$. In this case, $|x_i - x_{i+1}|$ is at most the sum of lengths of the intervals $I_\ell$ and $I_r$, which is the length of the interval label of $v$. In this case, we give both $v_\ell, v_r$ a charge equal to the length of their interval labels $I_\ell, I_r$. 
    
    In the charging process above, every node is charged at most the length of its interval label. Furthermore, if a node is the right child of its parent and has the same interval label as its left sibling, then the node is not charged. First, we consider the charge on non-leaf nodes of elementary trees. Let $I$ be an interval label of such a node that was charged. Since this is a non-leaf node, it's height is some $h \ge 1$. The key observation is that there is at least one elementary tree node in the ensemble that is labeled by $I$ and did not get charged. In particular, consider the composite tree node at height $h$ labeled by $I$. In the elementary tree representing this node, the root is labeled $I$. But, this root could not have been charged in the above process since the nodes that were charged ($v_\ell, v_r$) were children of some node $v$, and therefore, no root node was charged. It follows that if $r$ nodes labeled by $I$ are charged overall, there are at least $r+1$ nodes labeled by $I$ in the ensemble. By \Cref{lem:main-space}, this implies that there are at least $r\cdot 2^h$ elements in interval $I$ that were inserted in the ensemble. we distribute the charge on $I$ equally among these elements; since $I$ is of length $\ell / 2^{H-h}$, each element gets a charge of $\frac{r\cdot \ell / 2^{H-h}}{r\cdot 2^h} = \ell / 2^H$. Now, each element can be in at most $H-1$ intervals corresponding to heights $h = 1, 2, \ldots, H-1$ that give it a charge, since the interval at height $H$ can only label root nodes and therefore, does not have any charge. Finally, we add the charge at leaf nodes: since only occupied leaf nodes can be charged, each element gets an additional charge of at most $\ell/2^H$. The final charge on each element is at most $H\cdot \ell / 2^H$, i.e., $nH\cdot (\ell / 2^H)$ overall.

\bigskip
    
    Now, we consider \two cost. Note that this is limited to the composite tree $A$. We differentiate between $x_i > x_{i+1}$ and $x_i \le x_{i+1}$, and define the {\em error} as $\max(x_i - x_{i+1}, 0)$. First, we bound error of \two. Later, we will show that this can be used to easily bound the cost of \two. 
    
    Let $x_i, x_{i+1}$ be in elementary trees corresponding to nodes $u_\ell, u_r$ in $A$. Let $v$ denote the least common ancestor of $u_\ell, u_r$ in $A$, and let $v_\ell, v_r$ be the left and right child node of $v$ in $A$. There are three subcases:
    \begin{itemize}
        \item[-] In the first subcase, $u_\ell, u_r$ are distinct from $v$. Then, $x_i$ is the rightmost non-empty leaf in the composite subtree under $v_\ell$, $x_{i+1}$ is the leftmost non-empty leaf in the composite subtree under $v_r$, and the elementary tree at node $v$ is empty. Note that by \Cref{cl:label}, $x_i$ is in the interval label of $v_\ell$ and $x_{i+1}$ is in the interval label of $v_r$. Since the interval label of $v_r$ is disjoint from and greater than the interval label of $v_\ell$, we have that $x_{i+1} > x_i$. Therefore, there is no \two error in this case.
        \item[-] In the second subcase, $u_\ell = v$. Then, $x_i$ is the rightmost non-empty leaf in the elementary subtree at node $v$ and $x_{i+1}$ is the leftmost non-empty leaf in the composite subtree under $v_r$. 
        \item[-] In the third subcase, $u_r = v$. This is symmetric to the second subcase. Here, $x_i$ is the rightmost non-empty leaf in the composite subtree under $v_\ell$ and $x_{i+1}$ is the leftmost non-empty leaf in the elementary subtree at node $v$. 
    \end{itemize}
    In the second and third subcases, consider the case when $x_i > x_{i+1}$ (i.e., there is non-zero error). The error $x_i - x_{i+1}$ is at most the length of the interval label of $v$ (call it $I_v$). We charge $x_i - x_{i+1}$ to node $v$. Any node in $A$ can be charged at most twice by this process, once each by the second and third subcases. So, if the length of $I_v$ is $\ell_v$, then the charge on $v$ is at most $2\ell_v$. Moreover, if $v$ is at height $h$ in $A$, then $\ell_v = \ell / 2^{H-h}$ and the charge on $v$ is at most $2 \ell / 2^{H-h}$. By \Cref{cor:composite-space}, at least $2^{h-1}$ elements in $I_v$ have been inserted in $A$. Distributing the charge on $v$ equally among these elements gives each element a charge of $\frac{2 \ell / 2^{H-h}}{2^{h-1}} = 4\ell / 2^H$. Finally, note that an element can be charged at most once by each height $h = 1, 2, \ldots, H$, which gives a cumulative charge of at most $4H\cdot (\ell / 2^H)$ to each element. In other words, the total error of \two is at most $4nH \cdot (\ell / 2^H)$.

    We now translate \two error to \two cost. Let $C_1, C_2$ respectively denote \one and \two costs in $A$, and $E$ denote the \two error in $A$. Summing over all elements inserted in $A$, we have
    \[
        C_1 + C_2 
        = \sum_i |x_{i+1} - x_i| 
            \le 2 \cdot \sum_{i: x_i > x_{i+1}} (x_i - x_{i+1}) + (\max_i x_i - \min_i x_i)
            \le 2 (C_1 + E) + \ell.
    \]
     Then, 
     \[
        C_2 \le C_1 + 2 E + \ell \le nH \cdot (\ell/2^H) + 8 nH \cdot (\ell/2^H) + \ell = 9 nH \cdot (\ell/2^H) + \ell.
    \]        
    
\bigskip        

    Finally, the bound on \three cost simply follows by observing that there are at most $\alpha$ pairs of elements that incur cost of this type, and any such pair incurs a cost of at most $\ell$ since all the elements belong to the interval label of the roots of the trees.
\end{proof}


\section{An $O(\frac{n \log n}{\alpha})$-space, $O(\alpha \log n)$-cost Data Structure} \label{sec:small-error-part1}
We give the data structures that achieve $\gamma\cdot c = O(\log^2 n)$ for $\gamma\in [O(1), O(\log^2 n)]$ in two parts. In this section, we consider the range $\gamma \in [O(1), O(\log n)]$, and in the next section, we consider $\gamma \in [O(\log n), O(\log^2 n)]$. So, for any $\alpha\in [O(1), O(\log n)]$, our goal in this section is to design an $O(\frac{n \log n}{\alpha})$-space, $O(\alpha \log n)$-cost data structure.

\subsection{Description of the Data Structure} 
Our data structure is a tree ensemble $A$ of order $\alpha$, i.e., it comprises two parts: a ``prefix'' that is a single composite tree $B$, and a ``suffix'' that comprises of $\alpha$ elementary trees $C_1, \cdots, C_\alpha$. The interval labels of the roots of all the trees $B, C_1, C_2, \ldots, C_\alpha$ is $[0, 1)$, and their heights are $H := \log (n / \alpha)$.

\begin{remark}
    \label{rem:med-size-trees-are-complete}
    We assume that all the elementary trees and the composite tree are complete binary trees. This is w.l.o.g.\footnote{without loss of generality} by rounding up $n$ to the next power of $2$, which increases the space by at most a factor of 2. We also assume w.l.o.g. that $\alpha$ is a power of 2 (otherwise round either up or down to the next power to $2$).
\end{remark}

\subsection{The Online Sorting Algorithm}

When presented a new element $x_t$ at time $t$, the online algorithm first tries to insert it in the prefix, i.e., in the composite tree $B$ using the algorithm in \Cref{sec:composite}. If it succeeds, the algorithm exits with success. Otherwise, it tries to inserts $x_t$ in $C_1, C_2, \ldots, C_\alpha$ in this order, using the algorithm in \Cref{sec:elementary}. The first time that the insertion succeeds, the algorithm exits with success. 

We show that the algorithm succeeds to insert each element, i.e., it never happens that the algorithm fails for all of $B, C_1, C_2, \ldots, C_\alpha$.
%
For the purpose of the next proof, we consider the root of an elementary tree unmarked until some element has been inserted in it. 
\begin{claim}\label{cl:med-correct}
    Every element gets inserted by the online algorithm.
\end{claim}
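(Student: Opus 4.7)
The plan is to argue by contradiction using the Space Lemma (\Cref{lem:main-space}) together with its subsequent remark. Suppose, for the sake of contradiction, that some element $x_t$ with $t \le n$ fails. Then every insertion attempt among $B, C_1, \ldots, C_\alpha$ returns failure, and in particular the composite-tree attempt inside $B$ fails at its final step, which applies $\ins$ to the root of the top-level elementary tree sitting at the root of $B$.

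The first step is a simple subclaim: if an elementary tree is empty when an insertion is attempted on it, the insertion succeeds. By the preceding remark the root is treated as unmarked until the first insertion, and \emph{a fortiori} every internal non-root node is unmarked. Hence the $\ins$ procedure walks down the left spine, labeling each internal node with the half-interval of its parent that contains $x_t$ and eventually writing $x_t$ at the leftmost leaf. Consequently, failure in an elementary tree forces it to be non-empty. Applying this to the hypothesized failure of $x_t$: each $C_i$ must be non-empty, and so must the top-level elementary tree of $B$, giving $\alpha + 1$ non-empty elementary trees of height $H$ in the ensemble (the one sitting at the root of $B$ plus the $\alpha$ standalone trees $C_1,\ldots,C_\alpha$).

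Finally, I would invoke the Space Lemma together with its remark on the root interval $I^* = [0, 1)$ at height $h = H$. The remark sets $r + 1$ equal to the number of non-empty top-level elementary trees, so $r = \alpha$, and the lemma then guarantees that at least $r \cdot 2^H + 1 = \alpha \cdot (n/\alpha) + 1 = n + 1$ elements lying in $[0, 1)$ have already been inserted before $x_t$. But at most $t - 1 \le n - 1$ elements have been processed, a contradiction. The only delicate point here is to match the remark's indexing convention correctly for the statically labeled root interval $I^*$ (since the roots are labeled before any element arrives, one must count non-empty trees rather than marked roots); once that is handled, the rest of the argument is just bookkeeping.
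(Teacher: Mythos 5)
Your proof is correct and follows essentially the same argument as the paper's: assume an insertion fails, note that failure in an elementary tree can only occur when that tree is already non-empty (since an empty tree always admits an insertion down the left spine), conclude that the root elementary tree of $B$ together with $C_1,\ldots,C_\alpha$ gives $\alpha+1$ non-empty height-$H$ trees, and invoke \Cref{lem:main-space} with its remark on $I^* = [0,1)$ to obtain at least $\alpha \cdot 2^H + 1 = n+1$ inserted elements, a contradiction. The paper phrases the contradiction as showing the root of $C_\alpha$ is never marked, while you argue directly that no element can fail, but this is a cosmetic difference; you even flag the same indexing subtlety for the statically-labeled root interval that the paper handles via its pre-proof convention.
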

\begin{proof}
    We claim that the algorithm never marks the root of the elementary tree corresponding to $C_\alpha$ in the suffix. Suppose it does. Then, including the root node of the composite tree, there are $\alpha+1$ nodes at height $H$ that are labeled by the interval $[0, 1)$. By \Cref{lem:main-space}, this implies that $\alpha \cdot 2^H + 1$ elements in $[0, 1]$ have been inserted into the array. But, 
    \[
        \alpha\cdot 2^H + 1 
        = \alpha \cdot 2^{\log (n/\alpha)} + 1
        = n + 1
        > n,
    \]
    which is a contradiction.
\end{proof}

\subsection{Space Bound}

Since $H = \log (n/\alpha)$, according to \cref{lem:space-composite} the size of the prefix which includes the single composite tree is $2^{\log (n/\alpha)} \cdot \log (n/\alpha) = (n / \alpha) \cdot \log (n/\alpha)$. Each elementary tree $C_k$ in the suffix is of size $2^{\log(n/\alpha)} = n/\alpha$ and there are $\alpha$ such trees. Hence, the total size of the suffix is $n$. Putting everything together, the total size of the array $A$ is given by
\[
   (n/\alpha) \cdot \log (n/\alpha) + n = O\left(\frac{n\log n}{\alpha}\right), \text{ since } \alpha \le \log n.
\]

\subsection{Cost Bound}
To bound the cost, we use \cref{lem:main-cost}. The length of the interval labeling the roots is $\ell = 1$ and the height $H = \log (n/\alpha)$, i.e., $2^H = n/\alpha$. By \Cref{lem:main-cost}, the costs of \one, \two, and \three are respectively 
$n H\cdot (\ell / 2^H) = \alpha \log (n/\alpha)$, $9 nH\cdot (\ell/2^H) + \ell = 9\alpha \log (n/\alpha) + 1$, and $\alpha \ell = \alpha$. These add to $O(\alpha \log{n})$.

\section{An $O(\beta n \log n)$-space, $O\left(\frac{\log n}{\beta}\right)$-cost Data Structure}\label{sec:small-error-part2}
In this section, we give the data structure that achieves $\gamma\cdot c = O(\log^2 n)$ for the range $\gamma \in [O(\log n), O(\log^2 n)]$. Along with the data structure in \Cref{sec:small-error-part1}, this establishes $\gamma\cdot c = O(\log^2 n)$ for the entire range of interest $[O(1), O(\log^2 n)]$. So, for any $\beta\in [O(1), O(\log n)]$, our goal in this section is to design an $O(\beta n \log n)$-space, $O\left(\frac{\log n}{\beta}\right)$-cost data structure.

\subsection{Description of the Data Structure} 
Our data structure is a sequence of $\beta$ composite trees $A_1, \ldots, A_\beta$ each of height $H :=  \log n$. The interval label of the $k$th tree is $\left[\frac{k-1}{\beta}, \frac{k}{\beta}\right)$. The next remark is akin to \Cref{rem:med-size-trees-are-complete}:

\begin{remark}
    \label{rem:large-size-trees-are-complete}
    We assume that all the elementary trees and the composite tree are complete binary trees. This is w.l.o.g. by rounding up $n$ to the next power of $2$, which increases the space by at most a factor of 2. We also assume w.l.o.g. that $\beta$ is a power of 2 (otherwise round either up or down to the next power to $2$).
\end{remark}

\subsection{The Online Sorting Algorithm}

Suppose the online algorithm is presented a new element $x_t$ at time $t$. First, we determine the composite array $A_k$ that the element can be inserted in. Note that the intervals associated with these composite arrays are disjoint, and hence, there is a unique choice of $k$ given a fixed element $x_t$. 
The insertion of $x_t$ into the appropriate composite tree is as described in \ref{sec:composite}. 

We now show that the algorithm succeeds in inserting every element. For the purpose of the next proof, as in \Cref{cl:med-correct}, we consider the root of an elementary tree unmarked until some element has been inserted in it. 
\begin{claim}\label{cl:large-correct}
    Every element gets inserted by the online algorithm.
\end{claim}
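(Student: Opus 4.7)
The plan is to argue by contradiction, in close analogy with the proof of \Cref{cl:med-correct}. Suppose some element $x_t$ fails to be inserted into its designated composite tree $A_k$. Then by the composite-tree insertion procedure, the call to $\ins$ on the root of the top elementary tree of $A_k$ must return failure. Applying \Cref{cl:elem-greedy-one}, both children of this root inside the top elementary tree must already be marked, since otherwise recursing into an empty child would succeed.

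Let $I^* = [(k-1)/\beta,\, k/\beta)$ be the static label of the root of $A_k$, and let $I^*_L$ and $I^*_R$ denote its two half-intervals. The decisive step will be a count at height $H-1$ of $A_k$. Two elementary tree nodes at this height come from \emph{static} labels: the roots of the elementary trees that sit at the two sub-composite trees of $A_k$, labeled $I^*_L$ and $I^*_R$ respectively. Two more elementary tree nodes at height $H-1$ come from the \emph{dynamic} labels of the two (now marked) children of the root in the top elementary tree, each of which must be labeled by either $I^*_L$ or $I^*_R$. In total, there are four elementary tree nodes at height $H-1$ in $A_k$ labeled by a half of $I^*$, with at least one per half-interval guaranteed by the static contribution.

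Writing $a$ and $b$ for the number of these four nodes labeled $I^*_L$ and $I^*_R$ respectively, I have $a + b = 4$ and $a, b \geq 1$. Applying \Cref{lem:main-space} to each of $I^*_L$ and $I^*_R$ then gives that the number of elements with value in $I^*$ inserted into $A_k$ is at least $(a-1)\cdot 2^{H-1} + (b-1)\cdot 2^{H-1} + 2 = 2^H + 2 = n+2$. Since every element routed to $A_k$ lies in $I^*$ and fewer than $n$ elements can have arrived before $x_t$, this is a contradiction, completing the proof. The main subtlety I anticipate is correctly separating the static elementary tree roots of the two sub-composite trees from the dynamic labels appearing inside the top elementary tree of $A_k$; once that bookkeeping is in place, the Space Lemma delivers the contradiction immediately, with no need to analyze the recursion deeper into the tree.
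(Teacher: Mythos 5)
Your proof is correct in substance and takes a genuinely different route from the paper's. The paper's argument mirrors that of \Cref{cl:med-correct}: it imagines inserting $x_t$ into a hypothetical second elementary tree at height $H$ (turning $A_k$ into a tree ensemble of order $1$), so that two non-empty elementary trees at height $H$ share the label $[\tfrac{k-1}{\beta},\tfrac{k}{\beta})$; applying the Space Lemma at height $H$ (using the remark about the root interval $I^*$, where one counts non-empty trees) yields $2^H + 1 = n+1$ inserted elements, a contradiction. You instead drop to height $H-1$: from \Cref{cl:elem-greedy-one} both children $r_\ell, r_r$ of the top elementary tree's root are marked, and together with the two static roots of the height-$(H-1)$ sub-composite trees you count nodes labeled $I^*_L$ and $I^*_R$ and apply the Space Lemma to each half-interval. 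This dodges both the imaginary-tree device and the special-case remark for $I^*$, so it is a legitimately distinct argument.

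One caveat worth flagging. The Space Lemma's proof actually needs the counted nodes to be non-empty; for a static elementary-tree root this is not automatic. The static root at height $H-1$ labeled by the half-interval that contains $x_t$ is forced to be non-empty (the insertion failed there on the way up), and the static root for the other half-interval is non-empty whenever $r_\ell$ or $r_r$ carries that label. But in the case where $r_\ell$ and $r_r$ are both labeled by the half-interval of $x_t$, the static root for the \emph{other} half-interval may be empty, so your ``exactly four labeled nodes with $a,b\geq 1$'' claim is not fully justified. The argument nevertheless salvages itself: in that degenerate case you have three nodes all sharing one label, and the Space Lemma with $r+1=3$ gives at least $2\cdot 2^{H-1}+1 = n+1$ elements, still a contradiction. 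So the proof is sound, but the stated count of four and the resulting $n+2$ bound should be replaced by a short case distinction (or one should argue explicitly that only non-empty static roots are being counted).
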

\begin{proof}
    Suppose the algorithm were unable to insert an element $x_t$. Imagine that we insert the element in a second copy of the elementary tree at the root of the composite tree. Then, the roots of the two elementary trees of height $H$, one at the root of the composite tree and the other in the elementary tree where the last element is inserted, are marked nodes at height $H$ that are labeled by the same interval. By \Cref{lem:main-space}, this implies that $2^H + 1$ elements have been inserted from this interval. But, 
    \[
        2^H + 1 
        = 2^{\log n} + 1
        = n + 1
        > n,
    \]
    which is a contradiction.
\end{proof}

\subsection{Space Bound}

Since $H = \log n$, each composite tree has space $n \log n$ by \Cref{lem:space-composite}. Therefore, the overall space is $\beta n\log n$.

\subsection{Cost Bound}
First, consider the cost between the rightmost element of a composite tree and the leftmost element of the next composite tree. Since the interval labels of the roots of these composite trees are of length $1/\beta$ and these intervals are adjacent to one another, the maximum difference between these two elements can be $2/\beta$. Adding over the $\beta-1$ adjacent pairs, this gives a total cost of at most $2$ between elements occupying adjacent non-empty cells in different composite trees.

We now bound the cost within a composite tree using \Cref{lem:main-cost}. Note that a single composite tree is a tree ensemble of order $0$ and has only \one and \two cost. The length of the interval labeling the root is $\ell = 1/\beta$ and the height $H = \log n$, i.e. $2^H = n$. Let $n_i$ be the number of elements inserted in the $i$th composite tree $A_i$. By \Cref{lem:main-cost}, the costs of \one and \two for $A_i$ are $n_i H\cdot (\ell/2^H) = \frac{n_i \log n}{n \beta}$ and $9n_i H\cdot (\ell/2^H) + \ell = \frac{9 n_i \log n}{n \beta} + \frac{1}{\beta}$ respectively. These add to $O\left(\frac{n_i \log n}{n \beta} + \frac{1}{\beta}\right)$. Since $\sum_{i=1}^\beta n_i = n$, the total cost across the $\beta$ composite trees is at most $O\left(\frac{\log n}{\beta} + 1\right)$, which is $O\left(\frac{\log n}{\beta}\right)$ since $\beta = O(\log n)$.  

\section{Unknown Optimal Cost}\label{sec:doubling}
In this section, we consider the scenario where the range of the elements (and therefore, cost of the optimal solution $\opt$) is not known in advance. Until now, our analysis has focused on the case when the range of elements is known, which is $[0, 1]$ w.l.o.g. by scaling. In this context, the cost corresponds to the competitive ratio since $\opt = 1$. 
In this section, we show that we can recover all the results for known range even when the range of elements is unknown, using the same space and with an $O(\log n)$ overhead in the competitive ratio. 

We prove the following two theorems:
\begin{theorem}
    \label{thm:uknown-small-space}
    There exists an $O\left(\frac{\log^3 n}{\eps}\right)$-competitive algorithm using $(1+\eps)n$ space, for any $\eps \in [3\log n /n,1]$.
\end{theorem}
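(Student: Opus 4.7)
The plan is to wrap the $(1+\eps')n$-space data structure of \Cref{sec:small-space} in a doubly-nested guess-and-double scheme: an outer loop that guesses the range $R$ containing the elements, and an inner loop, reset at the start of every outer iteration, that guesses the number of elements falling in $R$. Concretely, for each pair (outer iteration $i$, inner iteration $j$) with range $R_i$ of length $r_i$ and capacity guess $n_j = 2^j$, we instantiate a fresh copy of the data structure of \Cref{sec:small-space} with space parameter $\eps' := \eps/c$ for a sufficiently large constant $c$, scaled so that its elementary trees have root label $R_i$. On the arrival of $a_t$: if $a_t \notin R_i$, we double (and shift) the range, reset $n_j \leftarrow 1$, and start a new copy; otherwise, if inserting into the current copy would exceed its capacity $n_j$, we double $n_j$ and start a new copy within the same $R_i$; otherwise, we insert $a_t$ into the current copy using the algorithm of \Cref{sec:small-space}. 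A standard initial guess makes the outer ranges form a doubling sequence whose final length $r_K$ is at most $2\opt$. Let $J_i$ denote the number of inner iterations in outer iteration $i$, let $m_{i,j}$ denote the number of elements ultimately stored in copy $(i,j)$, and set $N_i := \sum_j m_{i,j}$.

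The cost analysis is routine. By \Cref{lem:cost-smallspace} rescaled by $r_i$, copy $(i,j)$ incurs cost at most $O(r_i \log^2 n / \eps')$. Since $J_i = O(\log n)$, the total in-copy cost from outer iteration $i$ is $O(r_i \log^3 n / \eps')$, and the geometric sum over $i$ is $O(\opt \log^3 n / \eps')$. The cost of ``jumps'' between consecutive copies adds at most $\sum_i J_i r_i = O(\opt \log n)$, which is absorbed. Choosing $\eps' = \Theta(\eps)$ yields the desired competitive ratio $O(\log^3 n/\eps)$.

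The space analysis is the main obstacle, and the reason we cannot apply \Cref{sec:small-space} as a black box: a naive sum of copy capacities $\sum_{i,j}(1+\eps')n_j$ already overshoots $n$ by a constant factor, which is tolerable for looser space regimes but prohibitive for the tight $(1+\eps)n$ bound we need here. To sharpen this, we open the data structure and invoke \Cref{lem:smallspace-unused-space} directly, which guarantees that the unused space inside copy $(i,j)$ is at most $\eps' n_j$ at any time. Therefore the total space consumed by copy $(i,j)$ is at most $m_{i,j} + \eps' n_j$. The key observation is that every inner iteration except possibly the last is transitioned exactly when its copy becomes full, so $m_{i,j} = n_j$ for $j < J_i$; this forces $n_{J_i} \le N_i$ and hence $\sum_j n_j \le 2 n_{J_i} \le 2 N_i$. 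Thus the space used by outer iteration $i$ is at most $N_i + 2\eps' N_i$, and summing over $i$ gives a total of at most $(1+2\eps')n \le (1+\eps)n$ for $\eps' = \eps/2$. The condition $\eps \ge 3\log n/n$ in the theorem statement provides the margin needed to keep the elementary-tree height used by \Cref{sec:small-space} non-negative for the copies that apply it (for the tiny values of $n_j$ where this would fail, we substitute a trivial scheme placing the at most $n_j$ elements in an $n_j$-cell subarray, whose cost and space contributions are easily absorbed by the main analysis).
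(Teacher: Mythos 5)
Your proposal is correct and follows essentially the same route as the paper's proof: an outer doubling loop on the guessed range, an inner doubling loop on the number of elements (reset at each outer transition), instantiating the Section 3 data structure with a slightly reduced parameter $\eps' = \Theta(\eps)$ and invoking \Cref{lem:smallspace-unused-space} to bound per-copy unused space by $\eps' n_j$, then using the fact that all inner phases except the last are full so that the capacities $n_j$ telescope against the number of inserted elements $N_i$. The paper handles the tiny-$n_j$ corner case by inserting the first $\Theta(\log^2(1/\eps)/\eps)$ elements of each epoch sequentially before starting the inner doubling, rather than substituting a trivial subarray scheme phase by phase, but the two treatments are interchangeable. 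The only thing to tighten is the bookkeeping of constants: the $n_{J_i} \le N_i$ step and the choice $\eps' = \eps/2$ are essentially right (the paper uses $\eps/3$, matching the stated constraint $\eps \ge 3\log n/n$), and one should note explicitly that the number of outer epochs is at most $n$ so any additive $O(\eps')$ per epoch is absorbed; none of this changes the argument.
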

\begin{theorem}
    \label{thm:uknown-larger-space}
    There exists an $O\left(\frac{\log^3 n}{\gamma}\right)$-competitive algorithm using $O(\gamma n)$ space, for any $\gamma \in [1,\log^2 n]$.
\end{theorem}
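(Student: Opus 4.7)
The plan is to wrap the fixed-range $O(\gamma n)$-space, $O(\log^2 n/\gamma)$-cost data structures from Sections 5 and 6 in a two-level guess-and-double scheme, mirroring the high-level idea described in the introduction. The outer level guesses the range of element values, while the inner level guesses the number of elements processed within a fixed range guess.

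Concretely, I would maintain a current interval $I = [\ell, \ell + R)$ initialized based on the first element. Whenever an incoming element $a_t$ falls outside the current $I$, we declare a new \emph{range epoch} and repeatedly double $R$ (shifting $\ell$ downward if necessary) until $a_t \in I$; the data structure for the new epoch is laid out to the right of those for previous epochs in the physical array. Standard guess-and-double guarantees that $R_{\text{final}} \le 2\opt$ and that the sum of all range guesses is $O(\opt)$. Within a single range epoch, I would run an inner guess-and-double on the number of elements: start with a guess $m_1 = 1$ and allocate a fresh copy of the fixed-range data structure, scaled to $I$ and sized for $m_1$ elements; when the copy fills, double the guess, allocate a new copy to its right, and continue. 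Since at most $n$ elements ever arrive, each range epoch contains at most $O(\log n)$ inner copies.

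The cost analysis proceeds as follows. Each inner copy incurs cost at most $O((\log^2 n/\gamma)\cdot R)$ by applying the fixed-range guarantees to the rescaled instance on $I$. Summing over $O(\log n)$ inner copies per epoch gives $O((\log^3 n/\gamma)\cdot R)$, which absorbs the $O(R)$ connecting cost between consecutive copies. Summing over range epochs, the values $R$ form a geometric series dominated by $R_{\text{final}} = O(\opt)$, so the total cost is $O((\log^3 n/\gamma)\cdot \opt)$, giving the claimed competitive ratio. The space analysis is simpler: within a single range epoch, the sizes $O(\gamma m_i)$ of the inner copies form a geometric series dominated by the final copy $O(\gamma\, n^{(r)})$, where $n^{(r)}$ is the actual element count of epoch $r$; summing over epochs yields $\sum_r O(\gamma\, n^{(r)}) = O(\gamma n)$, as required.

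The main technical obstacles I anticipate are, first, defining the range-update rule so that each outer doubling step genuinely at most doubles the current range length (so that the sum of range guesses telescopes to $O(\opt)$) while still containing the triggering element $a_t$ and all previously inserted elements' intervals (since these are already in older blocks this is automatic); second, applying the fixed-range cost bound $O(\log^2 n/\gamma)$ uniformly across all inner copies with the global $n$ rather than the copy's capacity $m_i$ (harmless since $m_i\le n$); and third, bounding the cost of ``connections'' between adjacent physical blocks in the array — between consecutive inner copies within a range epoch, and between the last cell of one range epoch and the first of the next — each of which is at most the length of the current or adjacent interval and therefore also sums geometrically to $O(\opt)$.
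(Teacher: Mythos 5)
Your proposal is correct and matches the paper's approach almost exactly: a nested guess-and-double scheme with the outer level doubling the range estimate and the inner level doubling the element-count estimate, instantiating the fixed-range $O(\gamma n)$-space data structures of Sections~\ref{sec:small-error-part1} and~\ref{sec:small-error-part2} as black boxes inside each phase, and bounding both space and cost by geometric sums (with the $O(\log n)$ inner phases per epoch supplying the extra $\log n$ factor in the competitive ratio, and the constraint $\gamma \le \log^2 n$ used exactly as you note to absorb the $O(\log n)$ inter-phase connection cost into $O(\log^3 n / \gamma)$). The only nitpick is a wording slip: you want each epoch transition to \emph{at least} double the range (so the $R_i$ telescope geometrically), while also ensuring no single step overshoots by more than a constant factor (so $R_{\text{final}} = O(\opt)$); the paper's rule $b_{i+1} := \max\{c, 2b_i - a_i\}$ achieves both, as does your repeated-doubling rule, so this does not affect correctness.
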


To adapt the prior data structure for known range to this setting, we apply a doubling scheme to two parameters: the optimal cost $\opt$ and the size of the data structure for a given guess on $\opt$. We have an outer loop that (at least) doubles the estimate on $\opt$ in each iteration, and an inner loop that doubles the number of inserted elements. 
This nested doubling process is initialized after sequentially inserting the first two elements (in the first two array cells), whose difference provides an initial estimate for $\opt$. 


Each iteration of the outer loop is called an epoch. In the $i$-th epoch, the guess on $\opt$ is denoted $\opt_i$, and corresponds to a range $[a_i, b_i]$. When a new element arrives outside this interval, the $i$-th epoch ends and the $i+1$-st epoch starts. The range in the new epoch is at least double that in the previous epoch, and includes the new element. E.g., if the new element is $c > b$, the interval is updated to $[a_{i+1}, b_{i+1}]$ where $a_{i+1} := a_i$ and $b_{i+1} := \max\{c, 2b_i-a_i\}$. (The case $c < a$ is symmetric.) Correspondingly, $\opt_{i+1}$ is updated to $b_{i+1} - a_{i+1}$, which is at least $2\cdot \opt_i$. Each epoch is assigned a newly allocated region of space, which begins immediately after the last space used in the previous epoch.


Within an epoch, each iteration of the inner loop is called a phase.
Let $n_j$ denote the bound on the number of inserted elements in the $j$-th phase. The phase ends when the number of elements exceeds $n_j$, and the $j+1$-st phase starts with $n_{j+1} := 2 n_j$. Overall, we define $n_j := 2^j$.
There are some subtle differences in the space allocation in a phase between two cases: (a) for $(1 + \varepsilon)n$ space corresponding to \Cref{sec:small-space}, and (b) for at least linear in $n$ space corresponding to \Cref{sec:small-error-part1,sec:small-error-part2}.
We discuss these separately below. 

\subsection{$(1+\eps)n$ space, $\eps \in [3\log n /n,1] $}

In the $j$-th phase, we allocate a new array that begins immediately after the array used in the $j-1$-st phase. The amount of space allocated in the $j$-th phase is $(1+\eps/3)n_j$.
During the $j$-th phase, we employ the algorithm from \Cref{sec:small-space} for inserting $n_j$ elements with the variable $\eps / 3$ (instead of $\eps$) as a black box.
Since elementary trees are not defined for small values of $n$ (see \Cref{rem:epsilon-cannot-be-too-small}), we start any epoch by inserting the first $\frac{\log^2{\eps}}{\eps}$ elements consecutively before starting the doubling scheme.

In the proofs below, we use \(k_i, n_i\) to respectively denote the number of phases and elements in the $i$-th epoch, and \(n_{i,j}\) to denote the number of elements in its \(j\)-th phase.

\begin{lemma}[Space Bound]\label{lem:unknown-small-space-space}
    The algorithm inserts $n$ elements in $(1+\eps)n$ space.
\end{lemma}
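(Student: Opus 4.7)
The plan is to decompose the total space usage into per-epoch contributions, bound each contribution using the ``any-time'' unused-space guarantee of \Cref{lem:smallspace-unused-space}, and then exploit the geometric structure of the doubling scheme to absorb all slack into a $(1+\eps)$ factor. Let $N_i$ denote the number of elements inserted in epoch $i$, so that $\sum_i N_i = n$. Within epoch $i$ the first (at most) $n_0$ elements are inserted consecutively, and then the doubling scheme partitions the remaining insertions into $k_i \ge 0$ phases of sizes $n_{i,j} := 2^j$ for $j = 1, \ldots, k_i$. Write $m_{i,j}$ for the number of elements actually inserted in phase $(i,j)$; then the first $k_i - 1$ phases satisfy $m_{i,j} = n_{i,j} = 2^j$ and the final phase satisfies $1 \le m_{i,k_i} \le n_{i,k_i} = 2^{k_i}$.

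Next, I would invoke \Cref{lem:smallspace-unused-space} separately on each phase. Phase $(i,j)$ runs the \Cref{sec:small-space} algorithm with parameters $n_{i,j}$ (in place of $n$) and $\eps/3$ (in place of $\eps$), so the lemma guarantees that at any time during this phase the unused space is at most $(\eps/3) n_{i,j}$. Hence the total space consumed by the phase when it terminates is at most $m_{i,j} + (\eps/3) n_{i,j}$. Adding the cells used by the initial consecutive insertions in epoch $i$, the total space used by epoch $i$ is at most
\[
    N_i + \frac{\eps}{3} \sum_{j=1}^{k_i} n_{i,j}.
\]

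The third step is a geometric-series calculation that controls $\sum_j n_{i,j}$ in terms of $N_i$. Explicitly, $\sum_{j=1}^{k_i} n_{i,j} = 2^{k_i+1} - 2$, whereas $\sum_{j=1}^{k_i} m_{i,j} \ge 2^{k_i} - 1$ since the first $k_i - 1$ phases fill completely and phase $k_i$ contains at least one element. Therefore $\sum_j n_{i,j} \le 2 \sum_j m_{i,j} \le 2 N_i$, and the space used by epoch $i$ is at most $(1 + 2\eps/3) N_i$. Summing across all epochs yields a total space bound of $(1 + 2\eps/3) n \le (1+\eps) n$. Epochs with $k_i = 0$ (only consecutive insertions) trivially satisfy the same per-epoch bound.

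The principal obstacle I anticipate is justifying that the ``at any time'' guarantee of \Cref{lem:smallspace-unused-space} is legitimately invoked at the correct moment for each phase, regardless of whether the phase ends because the element count exceeds $n_{i,j}$, because an out-of-range element triggers a new epoch, or because the entire input stream has been exhausted. In all three cases the phase terminates at a moment when at most $n_{i,j}$ elements have been inserted into that phase's instance of the \Cref{sec:small-space} data structure, so the lemma applies verbatim. A small sanity check is also needed to confirm that the two initial consecutive insertions used to form the first estimate of $\opt$ are subsumed into the $n_0$ cells of epoch $1$, and that integer rounding of $n_{i,j}$ or $n_0$ does not introduce a higher-order space overhead that would break the $(1+\eps)n$ bound.
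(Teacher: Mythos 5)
Your proof is correct and follows essentially the same strategy as the paper: apply \Cref{lem:smallspace-unused-space} per phase with parameters $(n_{i,j}, \eps/3)$, observe that the final phase of each epoch is the only one that may be over-allocated relative to the elements it actually receives, and absorb that over-allocation via the factor-two structure of the doubling. The only cosmetic difference is how the factor of two is extracted: the paper compares the final phase's target $n_{i,k_i}$ to the penultimate phase's count $n_{i,k_i-1}$, while you sum the geometric series $\sum_j n_{i,j} = 2^{k_i+1}-2$ directly and compare it to $\sum_j m_{i,j} \ge 2^{k_i}-1$; both yield the same $2$. Your closed form actually gives the marginally sharper per-epoch bound $(1+2\eps/3)N_i$ versus the paper's $(1+\eps)N_i$, though this does not change the final statement. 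The anticipated ``obstacle'' you flag --- whether \Cref{lem:smallspace-unused-space}'s any-time guarantee really applies at whichever moment the phase terminates --- is exactly the point the paper relies on implicitly, and your justification (at most $n_{i,j}$ elements have been inserted into that phase's instance at termination) is the right one.
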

\begin{proof}
By \Cref{lem:smallspace-unused-space}, in phases $\le k_i-1$, the unused space is at most $\eps/3$ times the used space, and in phase $k_i$, it is at most $\eps/3$ times $n_{i, k_i}$. But, $n_{i, k_i} \le 2 n_{i, k_{i-1}}$, i.e., the unused space in phase $k_i$ is at most $2 \eps/3$ times the used space in this epoch. Thus, the total unused space is at most an $\eps$-fraction of the used space in this epoch. The lemma follows. 
\end{proof}

\begin{lemma}[Cost Bound]
    The cost incurred by the algorithm is at most $\opt\cdot O(\log ^3 n / \eps)$.
\end{lemma}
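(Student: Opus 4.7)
The plan is to decompose the total cost into four contributions and show each is bounded by $\opt\cdot O(\log^3 n/\eps)$: (i) the cost incurred \emph{within} each phase's array (as produced by the Section~\ref{sec:small-space} algorithm instantiated with parameter $\eps/3$), (ii) transitions between consecutive phases in the same epoch, (iii) transitions between consecutive epochs, and (iv) the cost of the initial consecutive insertions used to bypass the constraint of \Cref{rem:epsilon-cannot-be-too-small} at the start of each epoch.

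For (i), in phase $j$ of epoch $i$ we insert at most $n_{i,j}\le 2^j$ elements into a freshly allocated $(1+\eps/3)n_{i,j}$-cell array via the Section~\ref{sec:small-space} algorithm. Since the entire epoch lies in a range of length $\opt_i$, scaling \Cref{lem:cost-smallspace} gives a within-phase cost of $O(\opt_i\log^2 n_{i,j}/\eps) \le O(\opt_i\log^2 n/\eps)$. Summing over the $k_i = O(\log n)$ phases of epoch $i$ yields $O(\opt_i\log^3 n/\eps)$. To combine across epochs, I would use the key geometric-decay observation: if there are $T$ epochs then $\opt_i\le \opt_T/2^{T-i}\le \opt/2^{T-i}$, since $\opt_{i+1}\ge 2\opt_i$ by construction. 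Hence
\[
\sum_{i=1}^{T} \opt_i \;\le\; 2\,\opt,
\]
so the aggregate within-phase cost is $O(\opt\log^3 n/\eps)$, which will be the dominant term.

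For (ii) and (iii), any pair of elements occupying adjacent non-empty cells across a phase boundary contributes at most $\opt_i$ to the cost, and there are $k_i-1=O(\log n)$ such boundaries per epoch, giving $O(\opt_i\log n)$ per epoch and $O(\opt\log n)$ in total by the same geometric argument. Similarly, each epoch boundary contributes at most $\opt_{i+1}$, summing to $O(\opt)$. For (iv), the $\Theta(\log^2 n/\eps)$ consecutive insertions that open each epoch have pairwise absolute differences in $[a_i,b_i]$, so they contribute at most $O(\opt_i\log^2 n/\eps)$ per epoch, and again $O(\opt\log^2 n/\eps)$ in total. All four terms are therefore subsumed by $O(\opt\log^3 n/\eps)$.

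The main obstacle I anticipate is making the decomposition clean in the presence of the three sources of ``cross-boundary'' edges (phase boundaries, epoch boundaries, and the transition from the initial consecutive prefix to the doubling scheme), and verifying that the final (unfinished) phase $k_i$ of each epoch still admits the Section~\ref{sec:small-space} cost bound even though it may contain strictly fewer than $n_{i,k_i}$ elements; this should follow because the bound of \Cref{lem:cost-smallspace} holds at every intermediate time, not only after $n_{i,k_i}$ elements arrive. The geometric decay of $\opt_i$ backwards from $\opt$ is what prevents paying a factor equal to the number of epochs on top of the per-epoch bound, and is the one place where the doubling choice $\opt_{i+1}\ge 2\opt_i$ is essential.
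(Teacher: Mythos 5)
Your proposal is correct and follows essentially the same route as the paper's proof: the same four-way decomposition (within-phase cost via \Cref{lem:cost-smallspace}, phase-boundary cost $\le \opt_i$ each, epoch-boundary cost, and the cost of the initial sequential prefix), the same bound $k_i\le\log n$ on the number of phases per epoch, and the same geometric-decay observation $\sum_i \opt_i \le 2\opt$ to collapse the sum over epochs. The only cosmetic difference is that the paper bounds the $j$-th phase more sharply as $O(j^2/\eps)\cdot\opt_i$ and the prefix length as $\Theta(\log^2(1/\eps)/\eps)$ rather than your slightly looser $O(\log^2 n/\eps)$; both simplify to the same $O(\opt\log^3 n/\eps)$ final bound.
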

\begin{proof}
Fix epoch $i$. The cost between consecutive phases is at most \(\opt_i\), based on the current bound on the optimal solution. This results in a total additive cost of at most \(k_i \cdot \opt_i\) between the $k_i$ phases. In addition, since we insert the first $\frac{\log^2 (1/\eps)}{\eps}$ elements sequentially, we incur an additional cost of at most $\frac{\log ^2 (1/\eps)}{\eps} \cdot \opt_i$. Finally, there is a cost of $O\left(\frac{\log^2 n_{i, j}}{\eps}\right)\cdot \opt_i = O(j^2/\eps)\cdot \opt_i$ within each phase by \Cref{lem:cost-smallspace}.
Therefore, the total cost in the \(i\)-th epoch is at most
\[
\opt_i \cdot O\left(\frac{\log^2 (1/\eps)}{\eps} + k_i + \sum_{j=1}^{k_i} \frac{j^2}{\eps} \right)
= \opt_i \cdot O\left(\frac{\log^2 (1/\eps)}{\eps} + \frac{k_i^3}{\eps} \right). 
\]
In addition, the cost between epochs $i-1$ and $i$ is at most $\opt_i$.
Summing over all $\ell$ epochs, we get that the total cost is at most
\[
\sum_{i=1}^\ell \opt_i \cdot O\left(\frac{\log^2 (1/\eps)}{\eps} + \frac{k_i^3}{\eps} \right).
\]

Note that $\opt_i$ at least doubles in every epoch, i.e., $\sum_{i=1}^\ell \opt_i \le 2 \opt_\ell \le 2\opt$. Moreover, the total number of phases in any epoch is at most $\log n$, i.e., $k_i \le \log n$ for every $i$. Applying these facts to the above cost bound, we get that the total cost is at most
\[
    \opt \cdot O\left(\frac{\log^2 (1/\eps)}{\eps} + \frac{\log^3 n}{\eps} \right)
    = \opt \cdot O(\log^3 n / \eps).\qedhere
\]
\end{proof}

\eat{
Now consider the total error over \(\ell\) epochs. First, we bound the error between epochs. The cost from epoch \(i-1\) to epoch \(i\) is at most \(\opt_i\). Since the bound on \(\opt\) at least doubles in each epoch, we have
\[
\sum_{i=1}^{\ell} \opt_i \leq 2 \, \opt_\ell \leq 2 \, \opt.
\]

Now, consider \(\ell\) epochs of the doubling scheme on the value of the optimal solution. Since the bound on the value of \(\opt\) at least doubles in each epoch, and the algorithm is competitive relative to the given value of \(\opt\), the cost in the \(i\)-th epoch is scaled by a factor of \(1 / 2^{\ell - i}\). 
Then the error in the \(i\)-th epoch is:
\[
\frac{\frac{\log ^2 (\frac{1}{\eps})}{\eps} + k_i + \sum_{j=1}^{k_i} f(2^j)}{2^{\ell - i}}.
\]

Since the bound on the value of \(\opt\) in the final (\(\ell\)-th) epoch is at most \(2 \cdot \opt\), the total cost over all epochs is bounded by:
\[
2\opt \cdot \left( \sum_{i=1}^{\ell} \frac{\frac{\log ^2 (\frac{1}{\eps})}{\eps}+ k_i + \sum_{j=1}^{k_i} f(2^j)}{2^{\ell - i}} \right)
\]

Recall that \(f(n) = O(\frac{\log^2 n}{\eps}\)). We obtain the final bound on the total cost:
\begin{align*}
&  O\left( \opt \cdot \sum_{i=1}^{\ell} \frac{\frac{\log ^2 (\frac{1}{\eps})}{\eps} + k_i + \sum_{j=1}^{k_i} \frac{\log^2 {2^j}}{\eps}}{2^{\ell - i}} \right) = O\left( \opt \cdot \sum_{i=1}^{\ell} \frac{\frac{\log ^2 (\frac{1}{\eps})}{\eps} + k_i+ \sum_{j=1}^{k_i} \frac{\j^2}{\eps}}{2^{\ell - i}} \right)
 \\
 & = O\left( \opt \cdot  \sum_{i=1}^{\ell} \frac{\frac{\log ^2 (\frac{1}{\eps})}{\eps}+ k_i +\frac{1}{\eps} \sum_{j=1}^{k_i} j^2}{2^{\ell - i}} \right) 
= O\left( \opt \cdot  \sum_{i=1}^{\ell} \frac{\frac{\log ^2 (\frac{1}{\eps})}{\eps}+ k_i + \frac{k_i^{3}}{\eps}}{2^{\ell - i}} \right) \\
&= O\left( \frac{\opt}{\eps} \cdot  \sum_{i=1}^{\ell} \frac{\log ^2 (\frac{1}{\eps})+k_i^{3}}{2^{\ell - i}} \right)
\end{align*}
\begin{remark}
    \label{rem:unknown-bound-on-elements-iterations-small-space}
    The number of phases is at most \(\log n\), as this suffices to insert up to \(n\) elements.
\end{remark}
We rewrite the important term: 
\(
\sum_{i=1}^{\ell} \frac{k_i^{3}}{2^{\ell - i}}
\), using \cref{rem:unknown-bound-on-elements-iterations-small-space} the term is bounded by:
\[
\sum_{i=1}^{\ell} \frac{\log ^2 (\frac{1}{\eps})+\log ^3 n}{2^{\ell - i}}  
\le 2\log ^3 n\sum_{i=1}^{\ell} \frac{1}{2^{\ell - i}} \le 4\log ^3 n
\]
We thus obtain the desired bound:
\[
O\left(\opt \cdot \frac{\log ^3 n}{\eps}\right)
\]
}

\subsection{O($\gamma n$) space, $\gamma \in [1, \log^2{n}]$}

In the $j$-th phase, we allocate a new array that begins immediately after the array used in the \((j-1)\)-st phase. The allocated space in the $j$-th phase is $\frac{n_j \log n_j}{\alpha}$ or $\beta n_j \log n_j$, depending on whether the total space is $\frac{n \log n}{\alpha}$ or $\beta n \log n$. The parameters $\alpha$ and $\beta$ are as defined in \Cref{sec:small-error-part1,sec:small-error-part2} respectively.
In phase \(j\), we employ the algorithm designed for handling up to \(2^j\) elements as a black box, using the appropriate algorithm from \Cref{sec:small-error-part1,sec:small-error-part2}.

In the proofs below, we use \(k_i, n_i\) to respectively denote the number of phases and elements in the $i$-th epoch, and \(n_{i,j}\) to denote the number of elements in its \(j\)-th phase.

\begin{lemma}[Space Bound]\label{lem:unknown-small-error-space}
    The algorithm inserts $n$ elements in $O(\gamma n)$ space.
\end{lemma}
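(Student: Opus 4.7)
The plan is to treat the two subcases separately---the $\tfrac{n\log n}{\alpha}$-space variant from \Cref{sec:small-error-part1} and the $\beta n\log n$-space variant from \Cref{sec:small-error-part2}---and show that, in each, the total allocated space telescopes across phases and epochs to within a constant of the target bound. Within any single epoch, the per-phase allocations form a geometric-type progression dominated by the last phase, and across epochs the element counts partition the input, so the overall bound reduces to a product of $\log n$ with the total element count $n$, divided by $\alpha$ or multiplied by $\beta$ as appropriate.

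First, I would fix an epoch $i$ with $k_i$ phases. Since phases $1,\dots,k_i-1$ are each filled to their capacity $n_j = 2^j$ before the next phase is opened,
\[
    n_i \;\ge\; \sum_{j=1}^{k_i-1} 2^j \;=\; 2^{k_i}-2,
\]
so $2^{k_i}\le 2n_i + 2$ and $k_i \le \log n_i + O(1)$. These two elementary bounds are what convert a ``last-phase'' estimate into an estimate in terms of $n_i$.

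Next, I would sum the allocations across the $k_i$ phases of epoch $i$. In the \Cref{sec:small-error-part1} case, the space in phase $j$ is $\tfrac{n_{i,j}\log n_{i,j}}{\alpha}\le \tfrac{j\cdot 2^j}{\alpha}$, so the per-epoch total is
\[
    \sum_{j=1}^{k_i}\frac{j\cdot 2^j}{\alpha}
    \;=\; O\!\left(\frac{k_i\cdot 2^{k_i}}{\alpha}\right)
    \;=\; O\!\left(\frac{n_i\log n_i}{\alpha}\right),
\]
where the sum is dominated by its last term and the final equality uses the two inequalities above. The \Cref{sec:small-error-part2} case is symmetric and yields a per-epoch bound of $O(\beta\, n_i\log n_i)$.

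Finally, I would sum over epochs. Since distinct epochs handle disjoint parts of the input stream, $\sum_i n_i \le n$, and trivially $\log n_i \le \log n$. This gives total space at most $O\!\left(\tfrac{n\log n}{\alpha}\right)$ in the first case and $O(\beta\, n\log n)$ in the second, both of which match $O(\gamma n)$ by the definitions $\gamma = \log n/\alpha$ and $\gamma = \beta\log n$ used in \Cref{sec:small-error-part1,sec:small-error-part2}. I do not expect a substantial obstacle here: the only mild subtlety is that the final phase of the final epoch may be only partially filled, but since its allocated space is at most twice that of the immediately preceding phase the asymptotic bound is unaffected. Beyond this the argument is essentially a bookkeeping exercise verifying the geometric-series estimate and confirming that no cross-phase or cross-epoch overhead sneaks in.
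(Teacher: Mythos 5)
Your proof is correct and takes essentially the same approach as the paper's: in both cases the per-phase allocations form a geometric-type series dominated by the final phase, which is within a constant of $n_i$, and the epoch totals then sum to $n$. The paper expresses this more compactly by writing the per-phase allocation as $\gamma\, n_{i,j}$ and telescoping directly to $4\gamma n_i$, whereas you carry out the same computation explicitly (bounding $k_i$ and $2^{k_i}$ separately and handling the $\alpha$ and $\beta$ cases one at a time), but the substance is identical. One minor slip: from $n_i \ge 2^{k_i} - 2$ you should get $2^{k_i} \le n_i + 2$ rather than $2^{k_i} \le 2n_i + 2$, though the weaker form you wrote still suffices for the asymptotics.
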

\begin{proof}
Since $n_{i, k_i} = 2 n_{i, k_i-1} \le 2 n_i$ and $n_{i, j}$ doubles in every phase, the total space used in the $i$-th epoch is at most 
\[
    \sum_{j=1}^{k_i} \gamma\, n_{i, j} 
    \le 2 \gamma\, n_{i, k_i} 
    \le 4 \gamma\, n_i.
\]
The lemma follows since $\sum_i n_i = n$.
\end{proof}

\eat{
In phases $\le k_i-1$, the unused space is at most $\eps/3$ times the used space, and in phase $k_i$, it is at most $\eps/3$ times $n_{i, k_i}$. But, $n_{i, k_i} \le 2 n_{i, k_{i-1}}$, i.e., the unused space in phase $k_i$ is at most $2 \eps/3$ times the used space in this epoch. Thus, the total unused space is at most an $\eps$-fraction of the used space in this epoch. The lemma follows. 

For the first \(k_i - 1\) phases, the total space used is at most \(\sum_{j=1}^{k-1} \alpha n_{i,j}\), as guaranteed by the correctness of the algorithms in \cref{cl:med-correct} and \cref{cl:large-correct}. In the \(k_i\)-th phase, some portion of the allocated space may remain unused. This unused space is at most \(2n_{i,j-1} \cdot \alpha\).
Therefore, the total space used for inserting all elements is bounded by :
\[2\alpha n_{i,j-1} + \sum_{j=1}^{k_i} \alpha n_{i,j} \leq 3\sum_{j=1}^{k_i} \alpha n_{i,j} = 3\alpha  n_i
\]
As this argument applies across all epochs, we conclude that all elements can be inserted into the data structure within $3\alpha n $ 
space.
}

\begin{lemma}[Cost Bound]
    The cost incurred by the algorithm is at most $\opt\cdot O(\log ^3 n / \gamma)$.
\end{lemma}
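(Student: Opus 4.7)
The plan is to adapt the cost analysis from the $(1+\eps)n$-space case above, plugging in the cost bounds from \Cref{sec:small-error-part1,sec:small-error-part2} as black boxes. I would decompose the total cost into three parts: (i)~the cost accumulated \emph{within} each phase, (ii)~the transition cost between consecutive phases of the same epoch, and (iii)~the transition cost between consecutive epochs.

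First, for part~(i), I would fix an epoch $i$ and its $j$-th phase. Since all elements inserted during epoch $i$ lie in an interval of length $\opt_i$, by rescaling, the black-box algorithm (from \Cref{sec:small-error-part1} when $\gamma \in [1, \log n]$, and from \Cref{sec:small-error-part2} when $\gamma \in [\log n, \log^2 n]$) applied to $n_{i,j} = 2^j$ elements incurs a cost of $O(\log^2 n_{i,j}/\gamma)\cdot \opt_i = O(j^2/\gamma)\cdot \opt_i$. Summing over $j=1,\ldots,k_i$ yields a within-phase contribution of $O(k_i^3/\gamma)\cdot \opt_i$ for this epoch.

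Next, for part~(ii), the transition cost between two consecutive phases of epoch $i$ is at most $\opt_i$ (since both elements lie in an interval of length $\opt_i$), contributing at most $k_i\cdot \opt_i$ over the whole epoch. Similarly, for part~(iii), the transition from epoch $i-1$ to epoch $i$ costs at most $\opt_i$. Thus the total cost attributable to epoch $i$ is at most
\[
    \opt_i \cdot O\!\left(\tfrac{k_i^3}{\gamma} + k_i + 1\right).
\]

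Finally, I would sum over all epochs and invoke two standard facts that already appear in the $(1+\eps)n$-space proof: $\opt_i$ at least doubles between consecutive epochs, so $\sum_i \opt_i \le 2\opt$; and $k_i \le \log n$, since each phase inserts at least one element and the total number of elements is $n$. Using $\gamma \le \log^2 n$ (which ensures $\log n \le \log^3 n/\gamma$), the bound $\opt_i\cdot O(k_i^3/\gamma + k_i + 1) \le \opt_i\cdot O(\log^3 n /\gamma)$ holds, and summing collapses to $\opt \cdot O(\log^3 n /\gamma)$, as desired. The analysis poses no real obstacle; it is essentially bookkeeping, and the main thing to be careful about is that the per-phase black-box cost must be multiplied by the current $\opt_i$ (due to rescaling), and that the geometric growth of $\opt_i$ across epochs, combined with $k_i \le \log n$ and $\gamma \le \log^2 n$, absorbs the lower-order between-phase and between-epoch transitions into the final $O(\log^3 n/\gamma)$ factor.
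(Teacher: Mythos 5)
Your proposal is correct and follows essentially the same decomposition and bookkeeping as the paper's own proof: within-phase cost $O(j^2/\gamma)\cdot\opt_i$, between-phase cost $k_i\cdot\opt_i$, between-epoch cost $\opt_i$, then summing with $\sum_i\opt_i\le 2\opt$, $k_i\le\log n$, and $\gamma\le O(\log^2 n)$. The only cosmetic difference is that the paper separately accounts for sequentially inserting the first two elements of each epoch (an extra $O(\opt_i)$), which your $O(1)$ term absorbs anyway.
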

\begin{proof}
Fix epoch $i$. The cost between consecutive phases is at most \(\opt_i\), based on the current bound on the optimal solution. This results in a total additive cost of at most \(k_i \cdot \opt_i\) between the $k_i$ phases. In addition, there is a cost of $O\left(\frac{\log^2 n_{i, j}}{\gamma}\right)\cdot \opt_i = O(j^2/\gamma)\cdot \opt_i$ within each phase.
Therefore, the total cost in the \(i\)-th epoch is at most
\[
\opt_i \cdot O\left(k_i + \sum_{j=1}^{k_i} j^2 / \gamma \right)
= \opt_i \cdot O\left(k_i + k_i^3 / \gamma \right). 
\]
In addition, the cost between epochs $i-1$ and $i$ is at most $\opt_i$. Inserting the first two elements in the $i$-th  So, epoch sequentially results in a further cost of $2 \opt_i$. So, these add $3 \opt_i$ to the cost of epoch $i$.
Summing over all $\ell$ epochs, we get that the total cost is at most
\[
\sum_{i=1}^\ell \opt_i \cdot O\left(3 + k_i + k_i^3 / \gamma \right)
= \sum_{i=1}^\ell \opt_i \cdot O\left(k_i + k_i^3 / \gamma \right).
\]

Note that $\opt_i$ at least doubles in every epoch, i.e., $\sum_{i=1}^\ell \opt_i \le 2 \opt_\ell \le 2\opt$. Moreover, the total number of phases in any epoch is at most $\log n$, i.e., $k_i \le \log n$ for every $i$. Applying these facts to the above cost bound, we get that the total cost is at most
$\opt \cdot O(\log n + \log^3 n/\gamma)$. Since $\gamma \le O(\log^2 n)$, the lemma follows.
\end{proof}

\eat{    

    Between each of these \(k_i\) phases, the incurred cost is at most \(\opt_i\), based on the current bound on the optimal solution. This results in a total additive cost of at most \(k_i \cdot \opt_i\). In addition, in each epoch we start by inserting the first 2 elements sequentially for an extra total cost of at most $2 \cdot \opt_i$.
In each of the \(k_i\) phases, the cost corresponds to the algorithm used under the corresponding space constraint. Let \(f(n)\) denote the cost incurred when inserting \(n\) elements within the given space regime.

Therefore, the total cost within the \(i\)-th epoch is bounded by:
\[
\opt_i \cdot \left(2 + k_i + \sum_{j=1}^{k_i} f(2^j) \right)
\]
where \(f(2^j)\) represents the cost in the \(j\)-th phase handling up to \(2^j\) inserted elements.
Now consider the total cost over \(\ell\) epochs in the doubling process on the value of the optimal solution. First, we bound the cost between epochs. The cost from epoch \(i-1\) to epoch \(i\) is at most \(\opt_i\). Since the bound on \(\opt\) at least doubles in each epoch, we have:
\[
\sum_{i=1}^{\ell} \opt_i \leq \sum_{i=1}^{\ell} \frac{\opt_\ell}{2^{\ell - i}} \leq 2 \opt_\ell \leq 2 \opt.
\]
This inter-epoch contribution adds only a constant-factor overhead and can therefore be subsumed in the overall bound.

Next, we consider the within-epoch cost, noting that the cost in the \(i\)-th epoch is scaled by a factor of \(1 / 2^{\ell - i}\).
Therefore, the cost accumulated in the \(i\)-th epoch is:
\[
\frac{2+k_i + \sum_{j=1}^{k_i} f(2^j)}{2^{\ell - i}},
\]
The total cost across all epochs is bounded by:
\[
2\opt \cdot \left( \sum_{i=1}^{\ell} \frac{2+k_i + \sum_{j=1}^{k_i} f(2^j)}{2^{\ell - i}} \right),
\]

 Recall that \(f(n) = O(\frac{\log^2 n}{\alpha})\). We obtain the final bound on the total cost:
\begin{align*}
& O\left( \opt \cdot \sum_{i=1}^{\ell} \frac{k_i + \sum_{j=1}^{k_i} \frac{\log^2 {2^j}}{\alpha}}{2^{\ell - i}} \right) 
= O\left( \opt \cdot \sum_{i=1}^{\ell} \frac{k_i + \sum_{j=1}^{k_i} \frac{\j^2}{\alpha}}{2^{\ell - i}} \right)
= O\left( \opt \cdot  \sum_{i=1}^{\ell} \frac{k_i +\frac{1}{\alpha} \sum_{j=1}^{k_i} j^2}{2^{\ell - i}} \right)\\
 & = O\left(\opt \cdot  \sum_{i=1}^{\ell} \frac{k_i + \frac{k_i^{3}}{\alpha}}{2^{\ell - i}} \right) = O\left(\frac{\opt}{\alpha} \cdot  \sum_{i=1}^{\ell} \frac{k_i^{3}}{2^{\ell - i}} \right)
\end{align*}
\begin{remark}
    \label{rem:unknown-bound-on-elements-iterations}
    The number of phases is at most \(\log n\), as this suffices to insert up to \(n\) elements.
\end{remark}
We rewrite the important term: 
\(
\sum_{i=1}^{\ell} \frac{k_i^{3}}{2^{\ell - i}}
\), using \cref{rem:unknown-bound-on-elements-iterations} the term is bounded by:
\[
\sum_{i=1}^{\ell} \frac{\log ^3 n}{2^{\ell - i}} = \log ^3 n\sum_{i=1}^{\ell} \frac{1}{2^{\ell - i}} \le 2\log ^3 n
\]
We thus obtain the desired bound:
\[
O\left(\opt \cdot \frac{\log ^3 n}{\alpha}\right)
\]

}

\section{Closing Remarks}\label{sec:closing}
In this paper, we gave nearly tight bounds for the online sorting problem. Specifically, we gave deterministic algorithms with (a) $(1+\eps)n$ space and $O(\log^2 n / \eps)$ cost for any $\eps = \Omega(\log n / n)$, and (b) $\gamma n$ space and $O(\log^2 n / \gamma)$ cost for any $\gamma \in [O(1), O(\log^2 n)]$. These results are exponential improvements over the best previous results for the problem. Coupled with an existing lower bound of $\gamma\cdot c = \Omega(\log n)$, our results nearly resolve the online sorting problem. We also show that our results extend to the case where the range of the elements is not known in advance, with an additional factor of $O(\log n)$ in the competitive ratio of the algorithms.
\
{\small
\bibliography{bib}
}

\end{document}